\documentclass{article}

\usepackage[preprint]{neurips_data_2024}
\makeatletter\renewcommand{\@notice}{}\makeatother

\usepackage[utf8]{inputenc}
\usepackage[T1]{fontenc}
\usepackage{url}
\usepackage{booktabs}
\usepackage{multirow}
\usepackage{amsfonts}
\usepackage{nicefrac}
\usepackage{microtype}
\usepackage{xcolor}
\usepackage{amsmath,amssymb}
\usepackage{amsthm}
\usepackage{graphicx}
\usepackage{float}
\usepackage{tikz}
\usetikzlibrary{arrows.meta,positioning,calc,backgrounds,fit,shapes.geometric}
\newtheorem{definition}{Definition}
\newtheorem{proposition}{Proposition}
\newtheorem{remark}{Remark}

\title{DSGE as a Structured World Model:\\
Benchmarking Counterfactual Generalization in Economic Worlds}

\author{%
  Wenli Xu \\
  Faculty of Finance \\
  City University of Macau \\
  \texttt{wlxu@cityu.edu.mo} \\
}

\usepackage{hyperref}

\begin{document}
\maketitle

\begin{abstract}
Modern world models---Dreamer, transformer world models (IRIS, Genie), and JEPA /
next-latent architectures---learn dynamics from observed trajectories but share a
weakness: their transition map is disciplined only where data were seen, so it degrades
under \emph{policy-induced distribution shift} and on \emph{counterfactual states} off
the training path. We argue that a Dynamic Stochastic General Equilibrium (DSGE) model is
a \emph{structured world model}: its state is a \emph{belief state}---the very object a
latent world model learns, but supplied with causal structure and hard cross-equation
constraints. We introduce \textbf{DSGE-Gym}, a benchmark of eight DSGE environments with
off-path counterfactual test sets, scaling to the ECB's 230-variable New Area-Wide Model.
We find that (i)~learned world models match the dynamics on-path but collapse off-path
(5$\sigma$ tail RMSE up to ${\sim}40\times$ the on-path level), and (ii)~training the
\emph{same} architectures on data the DSGE \emph{generates} across rare and
counterfactual-policy states---coverage only a structural model can synthesize---roughly
halves tail error and cuts policy-regime error $10$--$280\times$ where the counterfactual
rule shifts the ergodic support. Because such coverage cannot be sampled from any single
history, this measures structure's ability to \emph{manufacture} the missing
distribution. DSGE-Gym and all code are released as a reproducible testbed for
counterfactual generalization.
\end{abstract}

\begin{figure}[H]
\centering
\vspace{-2pt}
\resizebox{0.62\linewidth}{!}{%
\begin{tikzpicture}[
  font=\small,
  box/.style={draw,rounded corners=2pt,minimum height=8mm,minimum width=11mm,align=center},
  lat/.style={draw,circle,minimum size=9mm,inner sep=1pt,align=center},
  ->,>=Latex,thick,node distance=6mm]

\node[align=center] at (3.15,2.25) {\textbf{(a) The world-model loop}\\[-1pt]\footnotesize both learn the same object};
\node (o)    at (0,0)   {$o_t$};
\node[box]      (enc) [right=7mm of o]   {Enc};
\node[lat]      (z)   [right=7mm of enc]  {$z_t$};
\node[box]      (pred)[right=10mm of z]    {$\Gamma$};
\node[lat]      (zn)  [right=10mm of pred] {$\hat z_{t+1}$};
\draw (o) -- (enc);
\draw (enc) -- (z);
\draw (z) -- (pred);
\draw (pred) -- (zn);
\node[gray!55!black,font=\footnotesize,below=1.5mm of z] {belief state};
\node (a) [below=13mm of pred] {$a_t$\,\footnotesize(policy)};
\draw (a) -- (pred);
\draw[dashed,->] (zn) to[out=60,in=120] node[above,pos=0.5]{\footnotesize rollout} (z);
\node[align=center,text width=56mm] at (3.15,-2.9)
 {\footnotesize \textbf{Learned:} $\Gamma$ \emph{fit} from trajectories, disciplined only on the data support.\\
  \textbf{DSGE:} $\Gamma,\Phi$ \emph{deduced} from theory, defined at \emph{every} state.};

\begin{scope}[xshift=88mm]
\node[align=center] at (2.6,2.25) {\textbf{(b) On-path vs.\ off-path}\\[-1pt]\footnotesize where the two diverge};
\draw[thick,fill=green!5] (-0.3,-1.7) rectangle (5.6,1.7);
\node[green!45!black,anchor=north west] at (-0.2,1.6) {\footnotesize DSGE: valid everywhere};
\begin{scope}
\clip (-0.3,-1.7) rectangle (5.6,1.7);
\fill[blue!12] (1.5,-0.1) ellipse (1.4 and 0.85);
\draw[blue!55!black,dashed,thick] (1.5,-0.1) ellipse (1.4 and 0.85);
\end{scope}
\node[blue!45!black] at (1.5,-0.1) {\footnotesize on-path};
\node[blue!45!black,align=center,font=\scriptsize] at (1.5,-1.45) {learned WM\\trained here};
\foreach \p in {(1.05,0.12),(1.9,0.18),(1.4,-0.35),(2.1,-0.1),(0.95,-0.15),(1.7,0.3),(2.2,0.02),(1.25,0.28)}
  \fill[blue!55!black] \p circle (0.9pt);
\node[red!70!black] (tail) at (4.7,1.15) {$\bigstar$};
\node[red!70!black,font=\scriptsize,anchor=west] at (4.85,1.15) {tail};
\node[red!70!black] (reg) at (4.5,-1.25) {$\blacktriangle$};
\node[red!70!black,font=\scriptsize,anchor=west] at (4.65,-1.25) {regime};
\node[red!70!black,font=\scriptsize,align=center] at (3.9,0.75) {learned WM\\ breaks $\times$};
\end{scope}
\end{tikzpicture}%
}
\caption{\textbf{DSGE as a structured world model.} \textbf{(a)}~A learned world model and a
DSGE describe the same object---an encoder mapping observations to a \emph{belief state}
$z_t$ (the latent a world model is trained to learn) and a transition $\Gamma$ that rolls it
forward under a policy action $a_t$. They differ in origin: the learned map is \emph{fit}
from sampled trajectories, whereas the DSGE map $\Gamma$ and its cross-equation constraints
$\Phi$ are \emph{deduced} from theory. \textbf{(b)}~That difference bites off-path. A learned
model is disciplined only on the on-path ergodic region it was trained on (blue); on
\emph{tail} states and counterfactual \emph{regime} shifts (red) it must extrapolate and
breaks. The DSGE is defined and internally consistent at every state, so it can not only
predict off-path but \emph{generate} the missing coverage---the mechanism we exploit and
benchmark in this paper.}
\label{fig:concept}
\end{figure}

\clearpage
\section{Introduction}

Two research traditions have converged on the same object from opposite directions. In
machine learning, a \textbf{world model} is an internal predictive model of an
environment that an agent simulates to evaluate actions before taking them
\citep{ha2018world,hafner2020dream,lecun2022path}. In macroeconomics, a \textbf{DSGE
model} is a structural description of how an economy evolves under shocks and policy,
used for forecasting, counterfactual analysis, and policy evaluation
\citep{kydland1982time,smets2007shocks}. Both describe states, actions, transitions,
and futures; both exist so that an agent can \emph{imagine} consequences before acting.

\paragraph{Off-path generalization is the crux.}
A world model earns its keep precisely when an agent uses it to evaluate actions it has
\emph{not} yet taken, and thus to predict states the environment has \emph{not} yet
visited. A model accurate only on its training distribution is of little use for
planning, policy counterfactuals, or risk assessment---the very settings where the
decision-relevant states lie off the observed path. This is doubly true in economics,
where the questions that matter are inherently counterfactual: \emph{what if the central
bank changes its rule? what if a once-a-decade disaster hits? what if the same deficit is
financed by debt rather than taxes?} Evaluating a world model on held-out \emph{on-path}
data---the standard protocol---therefore measures the wrong thing. The right test is
off-path, and it is the test we build a benchmark around.

The machine-learning world model is \emph{learned}: a neural network fits a transition
$z_{t+1}=F(z_t,a_t)$ from observed trajectories. This recipe has driven remarkable
progress---latent imagination in the Dreamer family
\citep{hafner2019planet,hafner2020dream,hafner2021mastering,hafner2023dreamerv3},
sequence modeling of dynamics in transformer world models
\citep{micheli2023iris,bruce2024genie,alonso2024diamond}, and prediction in
representation space in the JEPA program
\citep{lecun2022path,assran2023ijepa,bardes2024vjepa,teoh2025nextlat}. Yet the fitted
map carries no guarantee away from the states it was trained on. Three failure modes
are decisive for economics:
\begin{enumerate}
\item \textbf{Policy-induced distribution shift.} A change in policy moves the
endogenous state distribution, so a transition fit under one policy is evaluated where
it was never trained---the machine-learning form of the Lucas critique
\citep{lucas1976critique}.
\item \textbf{Rare-event extrapolation.} Disasters, binding constraints, and tail risk
are exactly the states underrepresented on any observed path.
\item \textbf{Weak causal grounding.} A black-box transition need not respect
accounting identities, technology, or intertemporal optimality, so its off-path
predictions can be internally inconsistent.
\end{enumerate}

A DSGE model has the complementary profile. It is \emph{theory-given} rather than
data-fit; its parameters are structural and interpretable; and its equilibrium
conditions impose causal structure and exact constraints that hold \emph{everywhere in
the state space}, including states never observed. This contrast motivates our central
question:
\begin{quote}
\emph{Can economic theory itself serve as a world model---and does its structure buy
the counterfactual generalization that learned world models lack?}
\end{quote}

A recent and important step in this direction is Equilibrium World Models
\citep{schaab2026ewm}, which use deep learning to \emph{solve} DSGE models globally.
Our question is different and complementary: rather than building a better solver, we
treat the DSGE as the world model an agent plans with, and we ask how the structure it
carries affects counterfactual generalization---using the DSGE both as ground truth and
as a generator of training coverage.

\paragraph{What we find.}
On four environments spanning real, monetary, fiscal, and open-economy policy (RBC, NK,
TANK, TCM)---plus a 230-variable production model (NAWM) as a scale test---the results are
consistent and sharp. \emph{(i)~Learned world models collapse off-path.}
Models that match the structured dynamics on-path (normalized RMSE ${\approx}0.004$--
$0.08$) blow up under $5\sigma$ tail shocks, with the most flexible
sequence/attention models---the closest analogues of modern world models---reaching RMSE
${\approx}3$--$4$, two orders of magnitude worse; \emph{flexibility correlates with
off-path fragility}. \emph{(ii)~Structure recovers it through coverage.} Training the
\emph{same} architectures, at the \emph{same} sample size, on data the DSGE
\emph{generates} across rare states and counterfactual policy regimes recovers off-path
accuracy---robustly halving tail error for the expressive models, and $10$--$280\times$
lower error on policy-regime shifts that are unobservable in any real
dataset and that only a structural model can synthesize. The largest regime gains arise
wherever the counterfactual rule \emph{moves} the ergodic support---government spending in
RBC, and, replicating across very different mechanisms, the unemployment benefit in a
search-and-matching economy and the carbon tax in an environmental-NK economy
(\S\ref{sec:morenv})---whereas where the rule instead \emph{contracts} the support (NK,
TANK, TCM) the narrow-trained error is already small and the gain is correspondingly smaller
(\S\ref{sec:replication}).
\emph{(iii)~Approximate structure is not enough.} A first-order DSGE oracle and naive
static-equilibrium penalties do not close the gap, because they miss the intertemporal
dynamics that drive off-path behavior---which is exactly why the structure should be
\emph{learned}, not imposed wholesale. Together these isolate \emph{coverage generation}
as the concrete, measurable mechanism by which economic structure helps.

\paragraph{Contributions.}
\begin{enumerate}
\item \textbf{A reframing (\S\ref{sec:swm}).} We argue that DSGE is a \emph{structured
world model} and show that its state variable is a \emph{belief state} in the precise
sense used by recent transformer world models
\citep{teoh2025nextlat,hu2025belief}---the conceptual bridge that makes the two
literatures one.
\item \textbf{A benchmark (\S\ref{sec:bench}).} \textbf{DSGE-Gym}: ten DSGE
environments with a unified world-model interface and off-path counterfactual test
sets (tail shocks, policy-regime shifts, permanent shocks), built on a higher-order
global solver.
\item \textbf{A controlled study (\S\ref{sec:exp}).} Across four environments
(RBC, New Keynesian, the two-agent TANK economy, and a two-country open economy)---and, as
a scale test, the 230-variable ECB New Area-Wide Model---we confirm that learned world
models collapse off-path, and show that \emph{DSGE-generated coverage} of rare and
counterfactual states recovers off-path generalization for the same architectures---each
model exercising a different policy ``action'' (real, monetary, fiscal, open-economy). We
also
report an instructive negative result---a first-order oracle and naive structural
penalties do not suffice---which motivates \emph{learning} structured world models.
\end{enumerate}

\section{Related Work}
\label{sec:related}

\paragraph{Latent and recurrent world models.}
\citet{ha2018world} showed that a generative latent dynamics model can train a
controller entirely inside a learned ``dream.'' The Dreamer family built this into a
general online-RL paradigm: PlaNet \citep{hafner2019planet} introduced the recurrent
state-space model (RSSM); Dreamer \citep{hafner2020dream} learned behaviors by
backpropagating through imagined latent rollouts; DreamerV2 \citep{hafner2021mastering}
adopted discrete latents; and DreamerV3 \citep{hafner2023dreamerv3} achieved robust
performance across more than 150 tasks with fixed hyperparameters. These models compress
history into a compact recurrent state---by architecture, a sufficient statistic.

\paragraph{Transformer and generative world models.}
Casting dynamics as sequence modeling, IRIS \citep{micheli2023iris} learns an
autoregressive transformer over discrete image tokens and is highly sample-efficient on
Atari; Genie \citep{bruce2024genie} learns controllable, generative interactive
environments at scale; and DIAMOND \citep{alonso2024diamond} uses diffusion for visually
faithful world simulation. Transformers, however, retain a memory that grows with
sequence length and have no inherent pressure to compress history into a compact state
\citep{teoh2025nextlat}---the property our structural view supplies for free.

\paragraph{Joint-embedding and self-predictive architectures.}
LeCun's JEPA program \citep{lecun2022path} argues for predicting in representation space
rather than observation space; I-JEPA \citep{assran2023ijepa} and V-JEPA
\citep{bardes2024vjepa} instantiate this for images and video. Most relevant to us,
NextLat \citep{teoh2025nextlat} augments next-token training with a \emph{next-latent}
self-prediction loss and proves the hidden state converges to a \emph{belief state};
Belief State Transformers \citep{hu2025belief} target the same object. This line builds
on self-predictive representation learning in RL
\citep{tang2023understanding,ni2024bridging}. These works make ``the model's latent is a
sufficient statistic'' a concrete, trainable property---exactly the property a
structural state variable has by construction.

\paragraph{Belief states and sufficient statistics.}
A belief state is a sufficient statistic of history for predicting the future
\citep{kaelbling1998planning,striebel1965sufficient}. In economics the \emph{state
variable} of a recursive (Bellman) formulation is precisely such a statistic---the same
idea, formalized decades earlier and embedded in the rational-expectations equilibrium
concept.

\paragraph{Deep learning for dynamic economic models.}
A fast-growing literature solves high-dimensional dynamic models with neural networks:
Deep Equilibrium Nets \citep{azinovic2022deqn}, deep learning for dynamic economic
models \citep{maliar2021deep}, economics-inspired networks with stabilizing homotopies
\citep{azinovic2023homotopy}, and global solutions with endogenous wealth distributions
\citep{fernandez2023financial}. Equilibrium World Models \citep{schaab2026ewm} are the
closest prior work: they enforce exact equilibrium conditions on a broad, model-generated distribution of ordinary, rare, and counterfactual states and certify the
resulting policy. \textbf{Our question differs.} EWM and the above \emph{solve} a DSGE
(output: a policy function $\pi(x)$); we ask what happens when the DSGE \emph{is} the
world model an agent plans with, and study counterfactual generalization. EWM's coverage
insight reappears in \S\ref{sec:exp} as the operative mechanism by which structure
helps. Broader surveys of reinforcement learning for economics
\citep{rl4econ2026survey} situate this agenda within ML-for-economics.

\section{DSGE as a Structured World Model}
\label{sec:swm}

\subsection{Problem formulation}
\label{sec:formulation}
We work with a controlled stochastic process generating observations
$o_t\in\mathcal{O}$ under actions $a_t\in\mathcal{A}$, with history
$h_t=(o_{1:t},a_{1:t-1})$. Following the world-model literature we are interested in
\emph{compact} predictors of the future.

\begin{definition}[World model]
A \emph{world model} is a map $F$ and a state space $\mathcal{Z}$ such that the next
state is $z_{t+1}=F(z_t,a_t,\varepsilon_{t+1})$ with exogenous innovation
$\varepsilon_{t+1}$, and the future observation law factorizes through $z$, i.e.\
$p(o_{t+1}\mid h_t,a_t)=p(o_{t+1}\mid z_t,a_t)$.
\end{definition}

\begin{definition}[Belief state / sufficient statistic]
\label{def:belief}
A statistic $z_t=\sigma(h_t)$ is a \emph{belief state} if it is sufficient for the future
given the action sequence: for all $k\ge1$,
\[
p\!\left(o_{t+1:t+k}\mid h_t,a_{t:t+k-1}\right)=p\!\left(o_{t+1:t+k}\mid z_t,a_{t:t+k-1}\right).
\]
Equivalently $I(o_{>t};h_t\mid z_t,a_{\ge t})=0$: conditioning on the full history adds
no predictive information beyond $z_t$.
\end{definition}

This is the object recurrent and self-predictive world models are trained to recover
\citep{hafner2023dreamerv3,teoh2025nextlat,hu2025belief}: a finite code that screens off
history. A DSGE is exactly such an object in recursive (Bellman) form. With structural
state $x_t$ (predetermined and exogenous variables), control/policy $a_t$, and structural
shock $\varepsilon_t$, the rational-expectations equilibrium induces a law of motion
\begin{equation}
x_{t+1}=\Gamma(x_t,a_t,\varepsilon_{t+1}),\qquad
\mathbb{E}_t\,\Phi(x_{t+1},x_t,a_t)=0\ \ \text{for all }x_t,
\label{eq:dsge}
\end{equation}
where $\Phi$ collects the cross-equation restrictions---Euler equations, market clearing,
technology, policy rules---that hold at \emph{every} $x_t$, not only on the ergodic path.
$\Gamma$ is the policy function obtained by solving \eqref{eq:dsge}; in DSGE-Gym it is a
pruned third-order perturbation, exact up to the perturbation order.

\paragraph{Task and metric.}
We study the \emph{one-step world-model task}: learn $\hat F$ predicting $z_{t+1}$ from
$(z_t,\varepsilon_{t+1})$, scored by normalized RMSE on a split $\mathcal{D}$,
$\mathrm{RMSE}(\hat F;\mathcal{D})=\big(\mathbb{E}_{\mathcal{D}}\|\hat
F(z_t,\varepsilon_{t+1})-z_{t+1}\|_2^2\big)^{1/2}$ on $z$-scored variables. Writing
$\mathcal{D}_{\mathrm{on}}$ for the on-path ergodic distribution and
$\mathcal{D}_{\mathrm{off}}$ for an off-path (tail or counterfactual-regime) distribution,
the headline quantity is the \emph{off-path generalization gap}
\begin{equation}
\Delta(\hat F)=\mathrm{RMSE}(\hat F;\mathcal{D}_{\mathrm{off}})-\mathrm{RMSE}(\hat F;\mathcal{D}_{\mathrm{on}}).
\label{eq:gap}
\end{equation}
A world model is useful for planning and policy analysis precisely when $\Delta$ is small,
because the decision-relevant states are typically those absent from historical data.

\subsection{Two kinds of world model}
\label{sec:twokinds}
A learned world model and a DSGE describe the same tuple (states, actions, transitions,
futures) but obtain it by opposite routes, and the difference is exactly along the axes
that govern off-path behavior (Table~\ref{tab:twokinds}).

\paragraph{Origin and training signal.}
A learned world model \emph{induces} its transition from sampled trajectories: it sees
$(z_t,a_t,z_{t+1})$ tuples and minimizes a prediction loss. Its knowledge is therefore
bounded by the support of the data---it can interpolate where trajectories are dense and
must extrapolate everywhere else. A DSGE \emph{deduces} its transition from primitives
(preferences, technology, market clearing) and an equilibrium concept; the resulting map
$\Gamma$ and constraints $\Phi$ in \eqref{eq:dsge} are defined at \emph{every} point of
the state space, not only where data happen to fall.

\paragraph{Causality and constraints.}
Because the learned map fits a conditional expectation, it has no built-in reason to
respect accounting identities, technological feasibility, or intertemporal optimality;
these can be violated freely in regions with little training mass. The DSGE encodes them
as hard cross-equation restrictions that hold identically, so its predictions are
internally consistent even on never-seen states---the property that underwrites
counterfactual reasoning.

\paragraph{What each one buys.}
The learned model wins on \emph{flexibility, scale, and adaptivity}: it can absorb
high-dimensional observations, improve with more data, and---crucially---degrade
gracefully under misspecification, since it is not committed to a fixed structural form.
The DSGE wins on \emph{interpretability, causality, off-path discipline, and
generativity}: its parameters have economic meaning, its structure survives policy
change (\S\ref{sec:whyhelp}), and it can synthesize valid data anywhere. The two profiles
are mirror images---each is strong exactly where the other is weak---which is why we read
them not as competitors but as the two halves of a single design space
(\S\ref{sec:bridge}).

\begin{table}[t]
\centering
\caption{Two kinds of world model. The axes that differ are precisely those that govern
off-path generalization.}
\label{tab:twokinds}
\small
\begin{tabular}{lll}
\toprule
Dimension & DSGE & Learned (JEPA/Dreamer-style) \\
\midrule
Origin             & Deduced from theory + equilibrium   & Induced from sampled data \\
Training signal    & Cross-equation restrictions $\Phi$  & Prediction loss on trajectories \\
Interpretability   & Structural, economically meaningful & Black-box latent \\
Causality          & Strong causal structure             & Weak / correlational \\
Constraints        & Exact, hold at every state          & Soft, data-region only \\
Off-path behavior  & Disciplined by $\Phi$               & Unconstrained extrapolation \\
Coverage           & Generator: synthesizes any state    & Bounded by data support \\
Adaptivity         & Brittle to misspecification         & Improves with data \\
\bottomrule
\end{tabular}
\end{table}

\subsection{The bridge: a DSGE state variable is a belief state}
\label{sec:bridge}
The conceptual core of the paper is that the two literatures study the \emph{same}
object. We make this precise.

\begin{proposition}[The structural state is a belief state]
\label{prop:bridge}
Let $\{o_t\}$ be generated by a rational-expectations equilibrium with law of motion
\eqref{eq:dsge} and an observation map $o_t=g(x_t)$. Then the structural state $x_t$ is a
belief state in the sense of Definition~\ref{def:belief}. If $g$ is injective on the
ergodic set, $x_t$ is moreover a \emph{minimal} sufficient statistic.
\end{proposition}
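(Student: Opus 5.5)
The plan is to show sufficiency by unrolling the law of motion \eqref{eq:dsge} forward from $x_t$, and then to get minimality from injectivity of $g$. Two modelling conventions implicit in the statement need to be made explicit first. (a) The innovations $\varepsilon_{t+1},\varepsilon_{t+2},\dots$ are i.i.d. and independent of the past, so $\varepsilon_{t+1:t+k}\perp (h_t,x_t)$ given the actions. (b) The structural state $x_t$ is a function of the history, so that $z_t=\sigma(h_t)$ is well defined as Definition~\ref{def:belief} requires. For (b), either take $\sigma$ to be a filter that recovers $x_t$ from $h_t$, or, in the form I would actually use, read the sufficiency statement as conditioning on $x_t$ jointly with $h_t$. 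Actions are treated as exogenous inputs, so conditioning on $a_{t:t+k-1}$ does not alter the law of the innovations.

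\emph{Step 1 (sufficiency).} Iterating $x_{s+1}=\Gamma(x_s,a_s,\varepsilon_{s+1})$ gives, for each $j\le k$,
\[
x_{t+j}=\Gamma^{(j)}(x_t,a_{t:t+j-1},\varepsilon_{t+1:t+j}),
\]
where $\Gamma^{(j)}$ denotes the $j$-fold composition. Hence
\[
o_{t+1:t+k}=\big(g(x_{t+1}),\dots,g(x_{t+k})\big)=G_k(x_t,a_{t:t+k-1},\varepsilon_{t+1:t+k})
\]
for a deterministic map $G_k$. Because $\varepsilon_{t+1:t+k}$ is independent of $h_t$ given $(x_t,a_{t:t+k-1})$, the conditional law of $o_{t+1:t+k}$ given $(h_t,a_{t:t+k-1})$ is the pushforward of the innovation law under $G_k(x_t,a_{t:t+k-1},\cdot)$. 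This pushforward depends on $h_t$ only through $x_t$, which is exactly the equality in Definition~\ref{def:belief}. The mutual-information form follows because the conditional law is unchanged, so $I(o_{>t};h_t\mid x_t,a_{\ge t})=0$. The restrictions $\Phi$ play no separate role: they are already embedded in $\Gamma$, the rational-expectations solution.

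\emph{Step 2 (minimality).} A sufficient statistic $x_t$ is minimal if it is a function of every other sufficient statistic $z_t=\sigma(h_t)$. I would argue this via the Lehmann--Scheffé-style criterion. Distinct ergodic states $x\ne x'$ must induce distinct predictive laws for the future, and then any sufficient $z_t$ must separate them, so $x_t=\phi(z_t)$ for some map $\phi$.

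The injectivity hypothesis only delivers this separation directly in the degenerate-noise limit. For $k=1$ with $\varepsilon$ fixed, $g\circ\Gamma$ separates points whenever $\Gamma(\cdot,a,\varepsilon)$ is injective. More simply, if $o_t$ itself is in the history, then $x_t=g^{-1}(o_t)$ on the ergodic set, so $x_t$ is a function of $h_t$. This also settles convention (b), and it shows that any sufficient statistic must carry at least the predictive information contained in $o_t$.

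\emph{Main obstacle.} The hard step is showing that distinct states give distinct predictive distributions. My plan is to reduce it to injectivity of $x\mapsto p(o_{t+1:t+k}\mid x,a)$ for some $k$, i.e.\ an identifiability or observability condition. I would try to derive this from injectivity of $g$ together with the fact that $x_t$ is recoverable from $o_t$. If it cannot be derived, I would state it as a mild added assumption: no two ergodic states that are observationally distinct today are predictively equivalent for all future horizons. The sufficiency part requires nothing beyond the Markov structure of \eqref{eq:dsge}.
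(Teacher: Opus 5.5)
Your sufficiency step is the paper's argument: the Markov property of \eqref{eq:dsge}, iterated and pushed forward through $g$. You do it more carefully. Conventions (a) and (b) are exactly what the sketch leaves implicit, and (b) is genuinely needed: for non-injective $g$ the state need not be a function of $h_t$, and only the conditional-independence form survives.

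The gap is in minimality. You cannot close it by deriving separation from injectivity of $g$, because that implication is false. Take $x_t=(x^1_t,x^2_t)$ with $x^1_{t+1}=\rho x^1_t+\varepsilon^1_{t+1}$, $x^2_{t+1}=\varepsilon^2_{t+1}$, i.i.d.\ Gaussian innovations, and $g$ the identity. Then $g$ is injective and $x_t$ is recoverable from $o_t$. Yet the law of $o_{t+1:t+k}$ given $h_t$ depends on $x_t$ only through $x^1_t$, so $x^1_t$ is a strictly coarser sufficient statistic and the minimality clause fails as stated.

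The paper's sketch makes the same inference (``$x_t$ is recoverable from $o_t$, so no coarser statistic retains sufficiency''), and it does not follow. A coarser statistic that cannot reconstruct $o_t$ need not lose any information about $o_{>t}$, and sufficiency concerns only the latter. Your remark that a sufficient statistic must carry the predictive information in $o_t$ fails for the same reason and should be dropped. Likewise, pointwise injectivity of $\Gamma(\cdot,a,\varepsilon)$ is the wrong object: states must be separated by the \emph{law} of $\Gamma(x,a,\varepsilon_{t+1})$.

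So your fallback is not optional: minimality needs an added identifiability hypothesis. With $g$ injective on the ergodic set it can be stated sharply at horizon one. For all ergodic $x\neq x'$ there must be an action $a$ such that $\Gamma(x,a,\varepsilon_{t+1})$ and $\Gamma(x',a,\varepsilon_{t+1})$ have different laws. This condition is necessary and sufficient:
\begin{itemize}
\item If the one-step laws agree for every action, the Markov property makes the laws of the whole future agree.
\item Conversely, equal laws of $o_{t+1}$ force equal laws of $x_{t+1}$, because $g$ is injective on the ergodic set.
\end{itemize}
In the linear case $x_{t+1}=Ax_t+Da_t+B\varepsilon_{t+1}$, $o_t=Cx_t$ with $C$ injective, the condition reduces to $A(x-x')\neq 0$ for distinct ergodic states. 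It fails as soon as the state carries a component that does not feed forward.

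Under this hypothesis your Lehmann--Scheff\'e step goes through as written. If $\sigma(h)=\sigma(h')$ for a sufficient $\sigma$, the predictive laws from $h$ and $h'$ coincide, hence so do those from $x(h)$ and $x(h')$. Therefore $x(h)=x(h')$ and $x_t=\phi(z_t)$.
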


\begin{proof}[Proof sketch]
By \eqref{eq:dsge} the process is first-order Markov in $x_t$: the conditional law of
$x_{t+1}$ given the entire history depends only on $(x_t,a_t)$ through $\Gamma$. Iterating,
the $k$-step-ahead law $p(x_{t+1:t+k}\mid h_t,a_{t:t+k-1})$ factors through $x_t$, and
since $o_{t+j}=g(x_{t+j})$ the same holds for observations, giving sufficiency
(Definition~\ref{def:belief}). Minimality: if $g$ is injective on the ergodic set then
$x_t$ is recoverable from $o_t$, so no coarser statistic of the history retains
sufficiency; hence $x_t$ is minimal. (When $g$ is not injective---partial
observability---the minimal sufficient statistic is the posterior over $x_t$, recovering
the classical POMDP belief state \citep{kaelbling1998planning}.)
\end{proof}

\begin{remark}
Proposition~\ref{prop:bridge} is, by design, elementary: it restates that the state of a
recursive (Markov) equilibrium is a sufficient statistic. Its role is not technical depth
but identification---it pins down, exactly, the object that recurrent and self-predictive
world models spend capacity \emph{learning} and that a DSGE supplies \emph{by
construction}. The contribution of \S\ref{sec:bridge} is this correspondence, not the
lemma.
\end{remark}

Proposition~\ref{prop:bridge} says the DSGE state variable satisfies, \emph{by
construction}, exactly the property that NextLat \citep{teoh2025nextlat} trains a
transformer to acquire and that Belief State Transformers \citep{hu2025belief} target:
\begin{quote}
\emph{NextLat's belief state $\equiv$ a DSGE state variable.}
\end{quote}
The two literatures are therefore \emph{complementary}, not opposed. A learned world model
\emph{searches} for a compact sufficient statistic from data and must discover the
generating map; a DSGE \emph{hands you one}, together with the causal map $\Gamma$ that
generates it and the constraints $\Phi$ it satisfies. What the LeCun/JEPA world model
lacks---structure, causality, constraints---is precisely what a DSGE supplies; what the
DSGE lacks---adaptivity to data and graceful degradation under misspecification---is what
learning supplies. This complementarity is the seed of the second paper in our agenda:
learn a NextLat-style latent world model whose belief state is \emph{regularized} toward
$\Phi$, inheriting the data-fit of the transformer and the off-path discipline of the
structure.

\subsection{Why structure should help off-path}
\label{sec:whyhelp}
Three properties of \eqref{eq:dsge} bear directly on the off-path gap \eqref{eq:gap}.

\paragraph{(P1) Policy invariance (the Lucas critique, restated).}
Partition the parameters into deep parameters $\theta$ (preferences, technology) and a
policy rule indexed by $\psi$ (e.g.\ a Taylor coefficient, a fiscal rule). The decisive
structural assumption is that $\theta\perp\psi$: deep parameters do not move when the
rule changes.

\begin{proposition}[Re-composition under policy change]
\label{prop:lucas}
Suppose the equilibrium map factorizes as
$\Gamma_\psi(x,a,\varepsilon)=G\big(x,\pi_\psi(x),\varepsilon;\theta\big)$, where $G$
depends on $\psi$ only through the action $a=\pi_\psi(x)$. Then a world model that has
learned $G(\cdot;\theta)$ from regime $\psi$ predicts correctly under any new regime
$\psi'$ by substituting $\pi_{\psi'}$, whereas a black-box that has fit the
\emph{reduced-form} composition $\Gamma_\psi$ directly is biased off its training regime
by $\|G(x,\pi_{\psi'}(x),\cdot)-\Gamma_\psi(x,\cdot)\|$.
\end{proposition}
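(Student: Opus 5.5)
The argument is essentially definitional, so I will prove the two halves separately, after first fixing what ``has learned'' means. For the structural model, I take it to mean that the learned map $\hat G$ agrees with $G(\cdot;\theta)$ on the whole domain $\mathcal{X}\times\mathcal{A}\times\mathcal{E}$. For the black-box, I take it to mean that the fitted map $\hat\Gamma$ agrees with $\Gamma_\psi$, which is its population-loss minimizer under regime $\psi$.

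\textbf{Correctness of re-composition.} Fix a new regime $\psi'$. By the factorization hypothesis applied at $\psi'$, the true equilibrium map is $\Gamma_{\psi'}(x,a,\varepsilon)=G\big(x,\pi_{\psi'}(x),\varepsilon;\theta\big)$. This uses the structural assumption $\theta\perp\psi$ stated in (P1): the same $\theta$ appears in both regimes, and $\psi$ enters only through the action. The re-composed predictor is $\hat G\big(x,\pi_{\psi'}(x),\varepsilon\big)$. Since $\hat G=G(\cdot;\theta)$ pointwise, this equals $\Gamma_{\psi'}$ for every $(x,\varepsilon)$.

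\textbf{Bias of the reduced form.} The black-box has no slot for $\psi$, so under $\psi'$ it still outputs $\Gamma_\psi(x,\varepsilon)=G(x,\pi_\psi(x),\varepsilon;\theta)$. Its pointwise prediction error is therefore
\[
\Gamma_\psi(x,\varepsilon)-\Gamma_{\psi'}(x,\varepsilon)=G\big(x,\pi_\psi(x),\varepsilon;\theta\big)-G\big(x,\pi_{\psi'}(x),\varepsilon;\theta\big),
\]
whose norm is exactly the displayed quantity $\|G(x,\pi_{\psi'}(x),\cdot)-\Gamma_\psi(x,\cdot)\|$. To connect with the metric of \eqref{eq:gap}, I take the $L^2$ norm over $(x,\varepsilon)$ drawn from the regime-$\psi'$ distribution $\mathcal{D}_{\mathrm{off}}$. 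That gives the black-box's RMSE under $\psi'$, while the structural model's RMSE is zero.

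It is also worth noting that the bias is generically nonzero. If $G$ is injective in $a$, or simply has nonzero $\partial_a G$, then the bias vanishes only where $\pi_{\psi'}(x)=\pi_\psi(x)$.

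\textbf{Main obstacle.} The difficulty is the identification caveat, not any computation. Learning $G(\cdot;\theta)$ from regime-$\psi$ data only pins $G$ down on the graph $\{(x,\pi_\psi(x))\}$. Evaluating $G$ at $(x,\pi_{\psi'}(x))$ may lie off that graph. I would handle this by stating it explicitly as part of the hypothesis that $G$ ``has been learned'', meaning it is known on the relevant $(x,a)$ region. In practice this could come from action variation or exploration in the data, or from the structure itself. I would also remark that this requirement is exactly the coverage issue studied in \S\ref{sec:exp}.
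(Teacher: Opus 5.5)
Your proposal is correct and is the direct-substitution argument the paper leaves implicit: the paper gives no proof, only the scope remark that the factorization hypothesis is exact for predetermined states but generally fails for jump variables. Your identification caveat, that regime-$\psi$ data pin $G$ down only on the graph of $\pi_\psi$, complements that remark. Two small fixes. First, nonzero $\partial_a G$ does not imply that the bias vanishes only where $\pi_{\psi'}(x)=\pi_\psi(x)$; you need injectivity of $a\mapsto G(x,a,\cdot;\theta)$. Second, the same support caveat applies to the black-box: a population-loss minimizer under $\psi$ is pinned to $\Gamma_\psi$ only on the regime-$\psi$ support. So the displayed norm equals its off-regime RMSE only under the idealization $\hat\Gamma=\Gamma_\psi$ on the regime-$\psi'$ states.
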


This is the Lucas critique \citep{lucas1976critique} as a statement about world-model
generalization: structural models re-compose under counterfactual policy; reduced-form
learners do not. It predicts exactly the failure we observe on the \textbf{regime} split.

\begin{remark}[Scope of Proposition~\ref{prop:lucas}]
\label{rem:lucas}
The clean factorization holds exactly for the transition of the \emph{predetermined}
states. For forward-looking \emph{jump} variables the one-step map under $\psi'$ depends on
agents' expectations of the entire future rule, so $G$ generally moves with $\psi$ beyond
the contemporaneous action; re-composition then requires the new equilibrium decision rule
$\pi_{\psi'}$ itself. That object is produced by re-solving the structural model and is
\emph{not} recoverable by a reduced-form learner trained on baseline-regime data. The
proposition should thus be read as bounding what \emph{any} single-regime learner can
achieve off-regime, and as identifying re-solving structure---not more data---as the only
admissible route to the counterfactual map. This is the precise sense in which our
\textbf{regime} results are about structure rather than extrapolation.
\end{remark}

\paragraph{(P2) Counterfactual consistency.}
The constraints $\Phi$ in \eqref{eq:dsge} hold at \emph{every} $x$, including states with
zero training mass. They therefore pin the map off-path where a likelihood-trained model
is unconstrained---a hard inductive bias active exactly where data are absent.

\paragraph{(P3) Coverage by construction.}
Because the DSGE is a \emph{generator}, it can synthesize valid transitions anywhere in
the state space---rare ($5\sigma$) states and entire counterfactual regimes that are
unobservable in real data. This turns structure into a \emph{data} advantage: a structural
model can teach a learned world model where history never went. \S\ref{sec:exp} isolates
(P3) as the operative mechanism in practice---and shows (P2), in the form of a fixed
linearized oracle or static penalties, is \emph{not} sufficient on its own, motivating
learned structure.

\section{DSGE-Gym: A Benchmark}
\label{sec:bench}

\paragraph{Environments.}
DSGE-Gym wraps a coherent suite of DSGE models with unified notation and a
shared world-model interface, plus a 230-variable production model as a scale test. This
paper fully specifies and benchmarks eight environments (Table~\ref{tab:envs}; equilibrium
systems, calibrations, and steady states in Appendix~\ref{app:models})---the four main
economies (RBC, NK, TANK, TCM), three further mechanisms not present in any of them
(\textbf{DMP} search-and-matching \emph{unemployment}, an \textbf{Environmental NK} economy
with a carbon stock and a productivity-damage feedback, and an endogenous \textbf{firm-entry}
economy with a product-variety margin, \S\ref{sec:morenv}), and the 230-variable NAWM scale
test. The released code additionally targets a further set of mechanisms---Gertler--Karadi
banks/QE, a non-tradable-goods open economy, and trend growth with permanent (unit-root)
shocks---so that the suite spans monetary, fiscal, labor, financial, climate, open-economy,
firm-dynamics, and non-stationary mechanisms, including heterogeneous-agent (TANK) economies
in the spirit of \citet{kaplan2018monetary}; these last three are distributed with the code
but are \emph{not} specified or benchmarked here, and enter no headline claim of this paper. Each environment is solved and simulated with a
pruned third-order perturbation solver---for genuine nonlinearity---and exposes
$z_{t+1}=F(z_t,a_t,\varepsilon_{t+1})$, where $z_t$ are model variables, $a_t$ the
policy/exogenous ``action,'' and $\varepsilon_t$ the structural shocks. The complete
equilibrium conditions, calibrations, steady states, and the explicit
$(z_t,a_t,\varepsilon_t)$ world-model mapping for every environment reported in the body
are given in Appendix~\ref{app:models}; the counterfactual test-set construction and all
implementation details are in Appendices~\ref{app:tests}--\ref{app:impl}.

\begin{table}[t]
\centering
\caption{Main DSGE-Gym environments fully specified and benchmarked in this paper.
Additional environments are distributed with the released code but are not specified or
benchmarked here.}
\label{tab:envs}
\begin{tabular}{ll}
\toprule
Model & Mechanism / counterfactual axis \\
\midrule
RBC          & Baseline real economy; government-spending action \\
New Keynesian & Monetary policy / Taylor-rule action; nominal rigidity \\
TANK         & Household heterogeneity, hand-to-mouth agents, fiscal-financing action \\
Two-country (TCM) & Open economy, real exchange rate, international bonds; monetary action \\
DMP          & Search-and-matching unemployment; labor-market (UI-benefit) action \\
E-NK         & Environmental NK; carbon stock + productivity damage; climate (carbon-tax) action \\
Firm entry   & Endogenous product variety (BGM); entry-cost / deregulation action \\
NAWM (230 vars) & Production-scale Euro Area--US model (ECB); scale stress test \\
\bottomrule
\end{tabular}
\end{table}

\paragraph{Unified interface.}
Every environment exposes the same API, so a baseline written once runs on all of them.
A model is registered by its variable and shock names; the simulator returns aligned
arrays of levels, and a thin loader forms supervised pairs
$x_t=[z_t,\varepsilon_{t+1}]$, $y_t=z_{t+1}$ (the realized next-period shock is supplied
as the ``action'' so the transition is near-deterministic and RMSE isolates the learned
map). Two reference oracles---a first-order analytic policy and the pruned third-order
policy---are exposed through the same interface, giving every task a structural upper
bound. The split generator (next paragraph) is identical across environments;
only the model file and its counterfactual parameters change. This is what lets
DSGE-Gym scale from an 11-variable RBC to the 230-variable NAWM---and to the full
ten-model suite---with no change to the learning code.

\paragraph{Tasks.}
DSGE-Gym defines three tasks of increasing difficulty.
\textbf{(T1) One-step prediction:} given $(z_t,\varepsilon_{t+1})$, predict $z_{t+1}$;
this is the task we study in depth and the cleanest probe of whether a model has learned
the transition. \textbf{(T2) Multi-step latent rollout:} autoregress the model $k$ steps
under a shock path and score accumulated error, which exposes compounding off-path drift
that one-step error can hide. \textbf{(T3) Lookahead planning:} choose an action sequence
$a_{t:t+H}$ to reach a target state, scored by \emph{regret} against a model-consistent
reference---the action sequence that the true dynamics $\Gamma$ imply reaches the target at
least cost under the model's own laws of motion (a control benchmark, not a welfare-optimal
Ramsey policy; the precise objective is fixed in the released T3 specification). T1 tests
the world model as a predictor; T2 tests it as a simulator; T3 tests it as a decision
tool---the use that makes off-path accuracy matter. Because the realized shock
$\varepsilon_{t+1}$ is supplied as input, \textbf{T1 is a deterministic-map regression}
that isolates whether a model has captured $\Gamma$; the distributional and multi-step
content that makes a world model a planning tool lives in T2/T3. We report T1 here and
release T2/T3 with the benchmark; the results below should be read as one-step map fidelity,
the necessary first probe rather than the whole of world-model quality.

\paragraph{Counterfactual test sets.}
The defining feature of DSGE-Gym is that train and test are drawn from the \emph{same}
solved model but different regions of its state space, so the gap between them is a clean
measurement of off-path generalization rather than a confound of data source. From each
model we generate (Appendix~\ref{app:tests}): \textbf{on-path} train/test (baseline
policy, ergodic $1\sigma$ shocks); \textbf{tail} ($5\sigma$ shocks, pushing into the
nonlinear region the third-order solution captures); \textbf{regime} (a counterfactual
change to a \emph{policy-rule} parameter---spending persistence in RBC, the Taylor
coefficient in NK, the fiscal-financing rule in TANK---evaluated by a model trained only
on the baseline regime, the operational form of the Lucas critique); and, where relevant,
\textbf{permanent} (non-stationary, unit-root) shocks. A fourth \textbf{wide} split is a
\emph{training} set the DSGE synthesizes across rare and counterfactual states at fixed
sample size, used to test the coverage hypothesis (P3).

\paragraph{Metrics.}
We report (i)~normalized RMSE per split (errors $z$-scored on on-path train statistics,
so units are comparable across variables and environments); (ii)~the \emph{off-path
generalization gap} $\Delta=\mathrm{RMSE}_{\mathrm{off}}-\mathrm{RMSE}_{\mathrm{on}}$
\eqref{eq:gap}, the headline quantity; and (iii)~a belief-state \emph{sufficiency probe}
that regresses the true structural state on a model's learned latent and reports
$R^2$---a direct test of whether the latent has become the sufficient statistic of
Definition~\ref{def:belief}. Reference oracles bound each metric from below. This paper
reports (i)--(ii); the sufficiency probe (iii) is most informative for \emph{learned}
latent world models with a non-trivial internal state, so we release it with the benchmark
and analyze it alongside the structured latent model of Paper~2 rather than report it on
the deterministic-map baselines here.

\section{Experiments}
\label{sec:exp}

\paragraph{Setup.}
We study four environments that share the same protocol but exercise different policy
``actions'': \textbf{RBC} (11 variables, 2 shocks; action = government spending),
\textbf{NK} (15 variables, 3 shocks; action = the monetary-policy rate via a Taylor
rule), \textbf{TANK} (22 variables, 3 shocks; two household types and public debt;
action = the fiscal-financing rule, under which Ricardian equivalence fails), and
\textbf{TCM} (45 variables, 6 shocks; a two-country open economy with a real exchange
rate, terms of trade, and international bond positions; action = the two Taylor rules).
Each is solved with a pruned third-order perturbation (quarterly calibrations from
\citealp{nispilandi2018rbc,nispilandi2021nk}; steady states verified against the
published values---e.g.\ TCM's non-analytic steady state, solved numerically, reproduces
all reported digits). The one-step
task predicts the next-period vector of model variables from the current vector plus the
realized next-period shocks; inputs/outputs are $z$-scored on on-path training
statistics. The \textbf{regime} (counterfactual) split changes a policy-rule parameter
that the learner never sees on-path---higher/persistent spending (RBC), a hawkish Taylor
coefficient $\phi_\pi{:}1.5{\to}3.0$ (NK), aggressive debt/spending stabilization
$\phi_d,\phi_g$ (TANK), or an exchange-rate-leaning rule $\phi_e{:}0{\to}0.5$
(TCM)---while the \textbf{tail} split applies $5\sigma$ shocks.
Baselines span the capacity spectrum: a \textbf{linear} (ridge) extrapolator, an
\textbf{MLP}, an \textbf{LSTM}, a \textbf{Transformer}, and \textbf{NextLat} (a
Transformer with the next-latent self-prediction loss of \citealp{teoh2025nextlat}). We
report means over three seeds. A fifth environment, the ECB New Area-Wide Model (NAWM;
230 variables, 21 shocks; \S\ref{sec:nawm}), serves as a production-scale stress test.
Implementation: MacroModelling.jl \citep{macromodelling}
for solving/simulation and PyTorch for the learned baselines. We present the RBC study in
detail (\S\ref{sec:exp}.1--3), then show the pattern replicates on NK and TANK
(\S\ref{sec:exp}.4).

\subsection{Learned world models collapse off-path (H1)}
Trained on on-path data, all learned models fit the dynamics well on-path (normalized
RMSE ${\approx}0.004$--$0.08$) but their error explodes off-path. Under 5$\sigma$ tail
shocks the attention/sequence models---the closest analogues of LeCun/JEPA-style world
models---reach RMSE ${\approx}3.3$--$3.8$, a gap of two orders of magnitude
(Table~\ref{tab:coverage}, ``narrow''); the low-capacity linear extrapolator degrades
far less. \emph{Flexibility correlates with off-path fragility}: precisely the models
with the most expressive learned dynamics fail hardest where data are absent. We do not
claim this is surprising in isolation---unconstrained over-parameterized predictors are
known to extrapolate poorly outside their training support, and the NAWM scale test
(\S\ref{sec:nawm}) makes this explicit by exhibiting the same collapse on a \emph{linear}
DGP that a linear baseline fits exactly. The contribution is not that learners extrapolate
badly but that a structural generator can supply the missing support (H2), and that
approximate imposed structure cannot (\S\ref{sec:exp}.2). Baselines are comparably and
modestly tuned across environments for protocol uniformity, so H1 should be read as the
behavior of standard learners under a common budget, not as a proof of an intrinsic ceiling
(see Limitations).

\subsection{A first-order DSGE oracle is a poor world model (negative result)}
A first-order (linearized) DSGE policy---a natural ``structured'' baseline---is a
\emph{local} Taylor expansion around the steady state. On the nonlinear (third-order)
data it already has nonzero on-path error and is \emph{worse than the black-box MLP}
off-path (tail RMSE ${\approx}0.80$ vs.\ $0.37$). Adding the static equilibrium
conditions as a soft penalty to the MLP also does not help, because those identities are
already satisfied on-path and do not pin the \emph{intertemporal} (Euler) dynamics that
drive off-path behavior. \textbf{Approximate structure is not enough}---which is exactly
why the next step is to \emph{learn} structured dynamics rather than impose a fixed
approximation.

\subsection{DSGE-generated coverage recovers off-path generalization (H2)}
The structural model's decisive advantage is that it is a \emph{generator}: it can
synthesize valid training data where real data never visit. We train the \emph{same}
architectures on a \textbf{wide} training set the DSGE produces across rare states
(1--5$\sigma$ shocks) and counterfactual policy regimes (varied spending and policy
persistence), holding the sample size fixed at the on-path level---so only
\emph{coverage} differs, not data quantity. Results are in Table~\ref{tab:coverage} and
Figure~\ref{fig:coverage}.

\paragraph{What ``coverage'' measures (and what it does not).}
We are deliberately explicit about the design. The wide set spans the same off-path
\emph{regions} the off-path tests probe---rare-shock scales and counterfactual policy
regimes (Appendix~\ref{app:tests})---and the evaluated \texttt{test\_regime} parameters lie
\emph{within} that span. H2 therefore does \emph{not} measure a learner extrapolating to a
regime it never saw; it measures what becomes possible once a structural generator
\emph{manufactures} the regime distribution that no single history contains. The scientific
content is the \emph{generativity} of structure (Remark~\ref{rem:lucas}: re-solving is the
only admissible route to the counterfactual map), not generalization of the learner. We
read the regime numbers in that light, and flag two controls we do not yet run but release
with the benchmark: (a)~a stricter \emph{leave-one-regime-out} sweep (train on a band of
regimes, test on a withheld one), and (b)~a \emph{non-structural} coverage source
(regime-dummy VARs, parameter-perturbation augmentation) that would isolate equilibrium
structure from coverage \emph{per se}. Both sharpen the claim and are deferred to Paper~2
(see Limitations).

\begin{table}[t]
\centering
\caption{Off-path RMSE, narrow (on-path) vs.\ wide (DSGE-coverage) training
(mean over 3 seeds; standard deviations $<0.15$). Lower is better; \textbf{bold} marks
the large gains from DSGE-generated coverage.}
\label{tab:coverage}
\begin{tabular}{lcccc}
\toprule
& \multicolumn{2}{c}{Tail (5$\sigma$)} & \multicolumn{2}{c}{Policy regime} \\
\cmidrule(lr){2-3}\cmidrule(lr){4-5}
Architecture & narrow & wide & narrow & wide \\
\midrule
Linear      & $0.254$ & $0.407$ & $0.690$ & $\mathbf{0.021}$ \\
MLP         & $0.379$ & $0.393$ & $1.177$ & $\mathbf{0.064}$ \\
LSTM        & $3.436$ & $\mathbf{1.404}$ & $2.778$ & $\mathbf{0.109}$ \\
Transformer & $3.824$ & $\mathbf{1.498}$ & $2.913$ & $\mathbf{0.297}$ \\
NextLat     & $3.312$ & $\mathbf{2.002}$ & $2.393$ & $\mathbf{0.276}$ \\
\bottomrule
\end{tabular}
\end{table}

\begin{figure}[t]
\centering
\includegraphics[width=\linewidth]{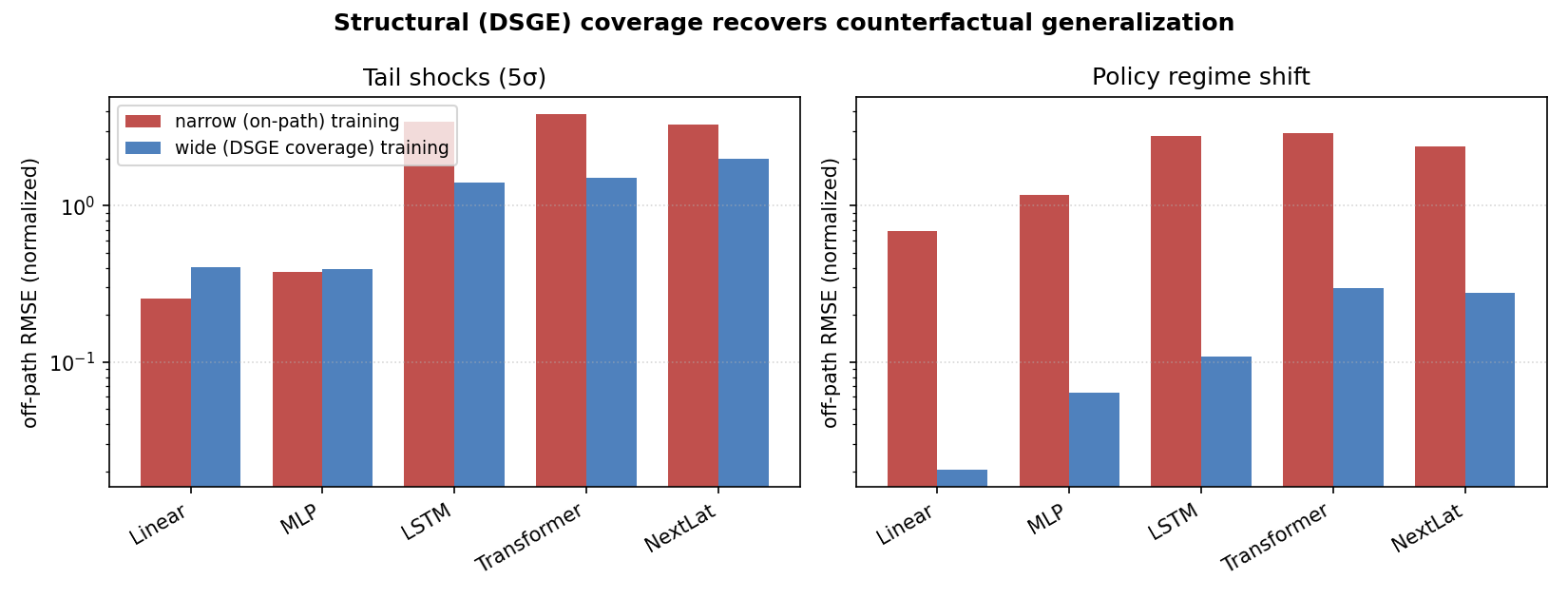}
\caption{DSGE-generated coverage recovers counterfactual generalization. The same
architectures are trained on narrow (on-path) vs.\ wide (DSGE-coverage) data and
evaluated off-path; log scale. The regime test (right) is unobservable in real data and
only a structural model can synthesize it.}
\label{fig:coverage}
\end{figure}

The effect is dramatic on the \textbf{policy-regime} test---a 10--40$\times$ reduction in
RMSE---precisely the setting that is unobservable in real data and that only a structural
model can synthesize. On \textbf{tail} events the flexible black-box models improve
${\sim}2.5\times$, because they have the capacity to exploit the added coverage. The
mechanism is exactly the coverage principle of \citet{schaab2026ewm}, here used not to
solve the model but to \emph{teach a world model where data cannot go}. This is the
practical content of our thesis: structure helps off-path because a structural model is
a counterfactual data generator.

\paragraph{Trade-off.}
Wide-coverage training spreads capacity over a larger region, so on-path sharpness drops
(wide-trained sequence models have higher on-path RMSE). The net effect is a world model
that is less precise on-path but far more robust off-path---usually the right trade for
planning and policy evaluation, where the states that matter are exactly the ones absent
from historical data.

\subsection{The pattern replicates on NK, TANK, and TCM}
\label{sec:replication}
We repeat the identical protocol on three richer economies whose policy ``action'' is
different in kind: NK (monetary policy, with nominal rigidity and a Taylor rule), TANK
(fiscal policy, with two household types, public debt, and a failure of Ricardian
equivalence), and TCM (a two-country open economy, with a real exchange rate, terms of
trade, and international bond positions; 45 variables, 6 shocks).
Table~\ref{tab:multimodel} and Figures~\ref{fig:coverage_nktank}--\ref{fig:coverage_tcm}
report the result. \textbf{H1 holds for the tail split throughout:}
trained on-path, the flexible sequence/attention models again blow up under $5\sigma$
tail shocks (RMSE ${\approx}3.2$--$4.3$). On the \emph{regime} split, by contrast, the
narrow-trained error is already \emph{small} here (RMSE ${\approx}0.16$--$0.28$): the
counterfactual policies (a hawkish Taylor rule; aggressive fiscal stabilization; an
exchange-rate-leaning rule) \emph{contract} the ergodic region rather than shift its
center, so even narrow-trained models stay near the data they saw and there is little
collapse to recover. \textbf{H2 is accordingly clearest on the tail,} where DSGE-generated
wide coverage---same sample size---roughly halves the error for the expressive models. On
the regime split, wide coverage reliably helps the lower-capacity learners (linear/MLP
improve $3$--$8\times$), but its benefit shrinks and, for the most expressive models in the
contracting-support environments, can vanish or mildly reverse (e.g.\ TANK and TCM NextLat
and TCM Transformer in Table~\ref{tab:multimodel} move by $\le0.07$ or worsen). We
therefore do \emph{not} claim a uniform regime gain: the dramatic $10$--$280\times$ regime
recovery appears wherever the counterfactual rule \emph{shifts} the ergodic support---not in
these contracting-support economies, but in RBC and in the three added support-shifting
mechanisms (DMP, E-NK, Firm; \S\ref{sec:morenv})---and we state this scope rather than
average it away. What \emph{does} survive unchanged from an
11-variable closed economy to a 45-variable two-country open economy is the tail collapse
and its coverage-based recovery---evidence it is a property of the
\emph{structure-vs-learning} contrast, not of any one model.

\begin{table}[t]
\centering
\caption{Replication on NK, TANK, and TCM: off-path normalized RMSE, narrow (on-path)
vs.\ wide (DSGE-coverage) training, fixed sample size (mean over 3 seeds). The
\emph{tail} H1$\to$H2 pattern of Table~\ref{tab:coverage} recurs under monetary (NK),
fiscal (TANK), and open-economy (TCM) actions; the \emph{regime} split shows little
narrow-trained collapse here (the counterfactual contracts rather than shifts the ergodic
support), so regime gains are small and capacity-dependent; the large regime recovery
instead appears in the support-\emph{shifting} economies (RBC, and DMP/E-NK in
Table~\ref{tab:morenv}). \textbf{Bold} marks coverage gains; note the unbolded regime
cells where wide training ties or worsens the expressive models.}
\label{tab:multimodel}
\begin{tabular}{llcccc}
\toprule
& & \multicolumn{2}{c}{Tail (5$\sigma$)} & \multicolumn{2}{c}{Policy regime} \\
\cmidrule(lr){3-4}\cmidrule(lr){5-6}
Env & Architecture & narrow & wide & narrow & wide \\
\midrule
\multirow{5}{*}{NK}
  & Linear      & $1.966$ & $2.236$ & $0.210$ & $\mathbf{0.055}$ \\
  & MLP         & $1.602$ & $1.802$ & $0.193$ & $\mathbf{0.098}$ \\
  & LSTM        & $3.715$ & $\mathbf{1.853}$ & $0.198$ & $\mathbf{0.097}$ \\
  & Transformer & $4.330$ & $\mathbf{2.009}$ & $0.194$ & $\mathbf{0.140}$ \\
  & NextLat     & $3.634$ & $\mathbf{2.022}$ & $0.198$ & $\mathbf{0.143}$ \\
\midrule
\multirow{5}{*}{TANK}
  & Linear      & $0.622$ & $1.758$ & $0.249$ & $\mathbf{0.031}$ \\
  & MLP         & $0.580$ & $1.799$ & $0.274$ & $\mathbf{0.082}$ \\
  & LSTM        & $3.289$ & $\mathbf{1.656}$ & $0.263$ & $\mathbf{0.095}$ \\
  & Transformer & $3.798$ & $\mathbf{1.643}$ & $0.280$ & $\mathbf{0.175}$ \\
  & NextLat     & $3.227$ & $\mathbf{1.836}$ & $0.276$ & $0.273$ \\
\midrule
\multirow{5}{*}{TCM}
  & Linear      & $1.640$ & $1.703$ & $0.271$ & $\mathbf{0.066}$ \\
  & MLP         & $1.328$ & $\mathbf{1.227}$ & $0.155$ & $\mathbf{0.136}$ \\
  & LSTM        & $3.649$ & $\mathbf{1.912}$ & $0.167$ & $\mathbf{0.137}$ \\
  & Transformer & $4.204$ & $\mathbf{1.839}$ & $0.179$ & $0.180$ \\
  & NextLat     & $3.752$ & $\mathbf{1.997}$ & $0.179$ & $0.243$ \\
\bottomrule
\end{tabular}
\end{table}

\begin{figure}[t]
\centering
\includegraphics[width=0.92\linewidth]{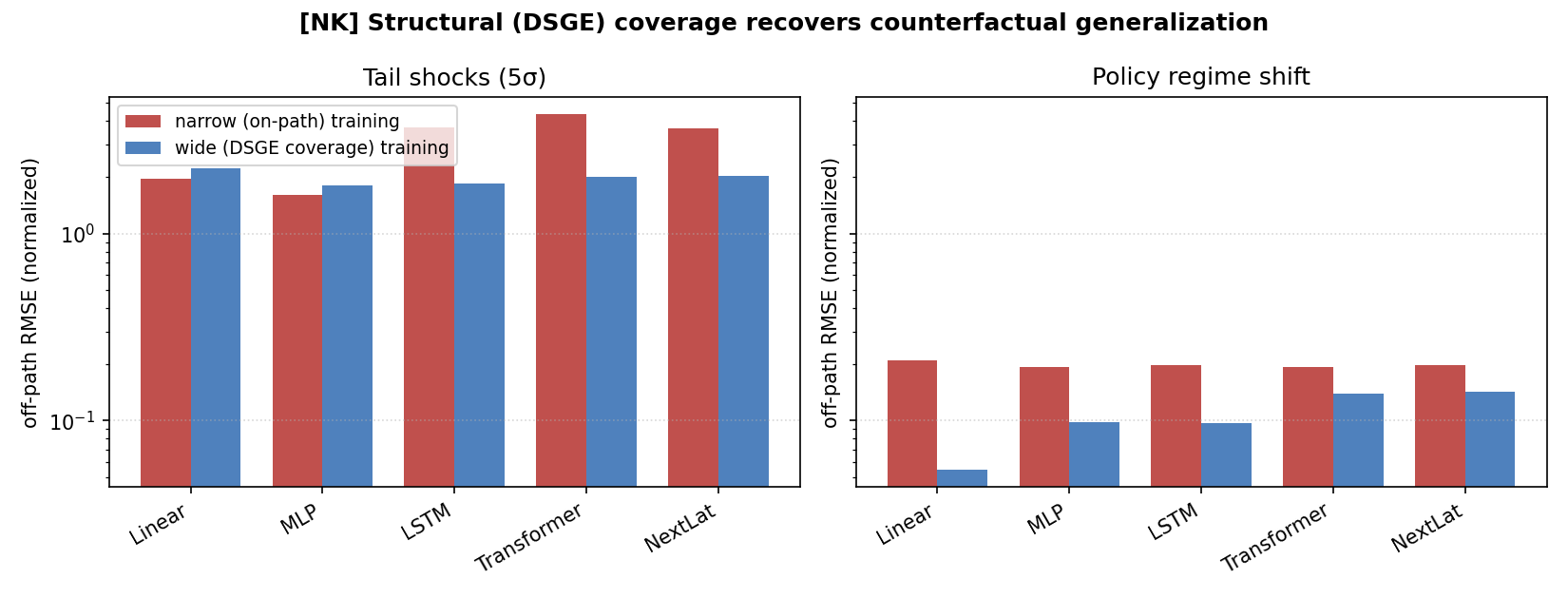}\\[2pt]
\includegraphics[width=0.92\linewidth]{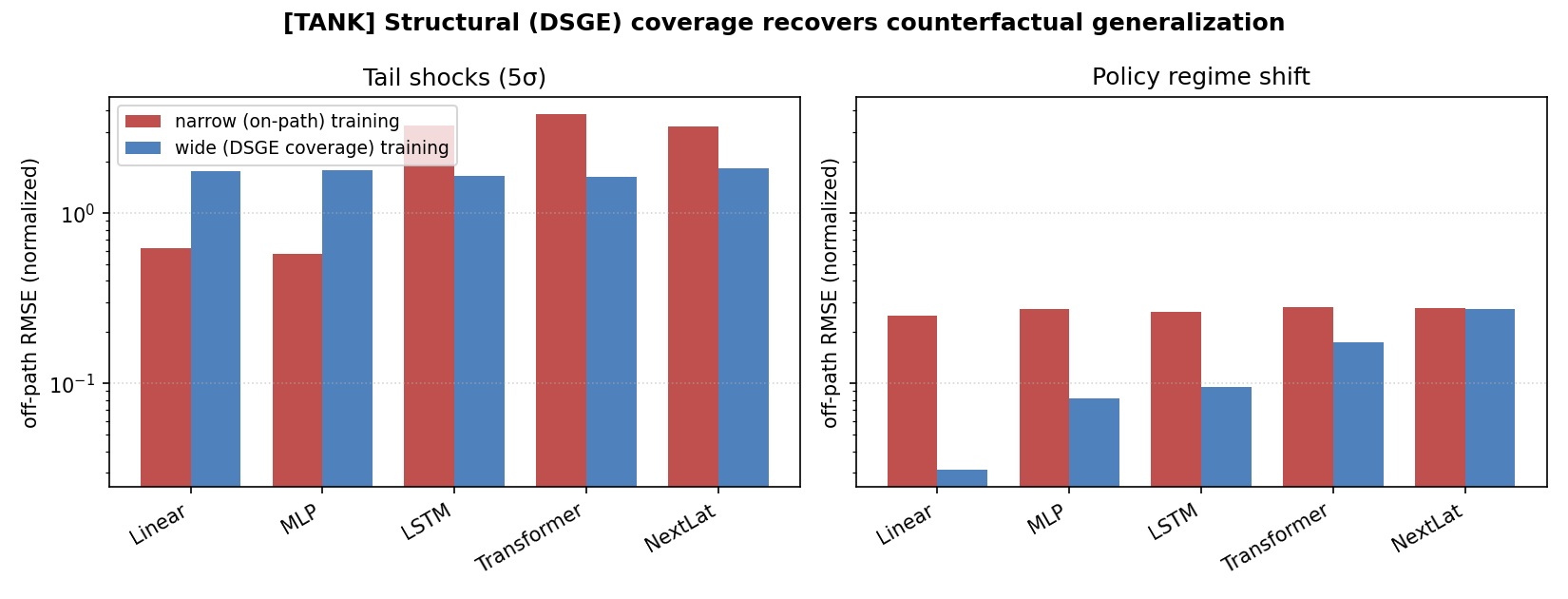}
\caption{Coverage replication on NK (top) and TANK (bottom). Same architectures trained
on narrow (on-path) vs.\ wide (DSGE-coverage) data and evaluated off-path; log scale.
As in RBC (Figure~\ref{fig:coverage}), DSGE-generated coverage lowers off-path error for
the expressive sequence/attention models, under both monetary (NK) and fiscal (TANK)
policy actions.}
\label{fig:coverage_nktank}
\end{figure}

\begin{figure}[t]
\centering
\includegraphics[width=0.92\linewidth]{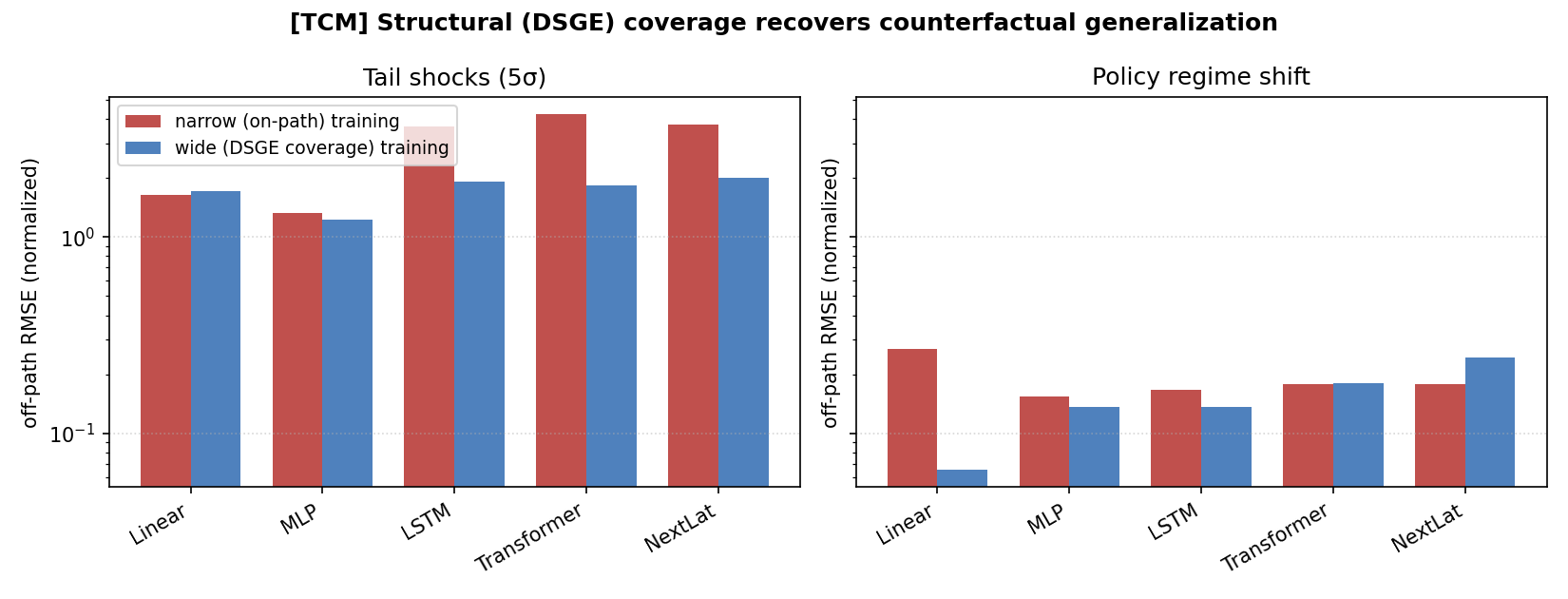}
\caption{Coverage replication on the two-country open economy (TCM; 45 variables, 6
shocks). The same H1$\to$H2 pattern holds with an open-economy monetary action
(exchange-rate-leaning Taylor rules), confirming the effect is not an artifact of model
size or closure.}
\label{fig:coverage_tcm}
\end{figure}

\subsection{Scaling to a production model: NAWM (230 variables)}
\label{sec:nawm}
To test whether the benchmark---and the result---survive at the scale of a real
policy model, we add the \textbf{New Area-Wide Model} of the European Central Bank
\citep{coenen2008nawm}, an estimated two-region (Euro Area + United States) DSGE with
\textbf{230 variables and 21 structural shocks}, taken unmodified from the
MacroModelling.jl model library. At this size a third-order solution is infeasible, so
the data are generated from the \emph{first-order} solution; the off-path \textbf{regime}
makes Euro-Area monetary policy hawkish ($\phi^{EA}_\pi{:}2{\to}4$). Results are in
Table~\ref{tab:nawm} and Figure~\ref{fig:coverage_nawm}.

Two findings stand out. First, \textbf{H1 survives to production scale}: trained on-path,
the sequence/attention models again collapse under $5\sigma$ tail shocks (RMSE
${\approx}3.4$--$3.5$) even though the data-generating process is \emph{linear}---they
overfit the ergodic region and extrapolate badly---while the linear ridge, which matches
the DGP, is essentially exact. Second, \textbf{H2 holds on the tail but interacts with
capacity}: DSGE-generated coverage cuts the sequence models' tail error by ${\sim}40\%$
($3.4{\to}2.1$, $3.5{\to}1.9$), but on the (already small) regime test only the
lower-capacity learners exploit the added coverage---the linear model improves
$7\times$ ($0.17{\to}0.025$)---whereas the fixed-size sequence models, underfit on a
230-dimensional target, spread their capacity too thin and do not. The headline collapse
and its coverage-based tail recovery are thus robust from an 11-variable RBC to a
230-variable estimated central-bank model; the regime result additionally exposes that
exploiting structural coverage requires sufficient model capacity---itself a useful signal
the benchmark surfaces.

\begin{table}[t]
\centering
\caption{NAWM (230 variables, 21 shocks; first-order DGP): off-path normalized RMSE,
narrow vs.\ wide training (mean over 3 seeds). H1 and the tail-coverage gain (H2) survive
to production scale; on the small regime test, only lower-capacity learners exploit
coverage.}
\label{tab:nawm}
\begin{tabular}{lcccc}
\toprule
& \multicolumn{2}{c}{Tail (5$\sigma$)} & \multicolumn{2}{c}{Policy regime} \\
\cmidrule(lr){2-3}\cmidrule(lr){4-5}
Architecture & narrow & wide & narrow & wide \\
\midrule
Linear      & $0.331$ & $0.331$ & $0.174$ & $\mathbf{0.025}$ \\
MLP         & $0.858$ & $\mathbf{0.498}$ & $0.092$ & $\mathbf{0.084}$ \\
LSTM        & $3.375$ & $\mathbf{2.055}$ & $0.172$ & $0.300$ \\
Transformer & $3.526$ & $\mathbf{1.860}$ & $0.156$ & $0.251$ \\
NextLat     & $3.480$ & $\mathbf{2.129}$ & $0.192$ & $0.418$ \\
\bottomrule
\end{tabular}
\end{table}

\begin{figure}[t]
\centering
\includegraphics[width=0.92\linewidth]{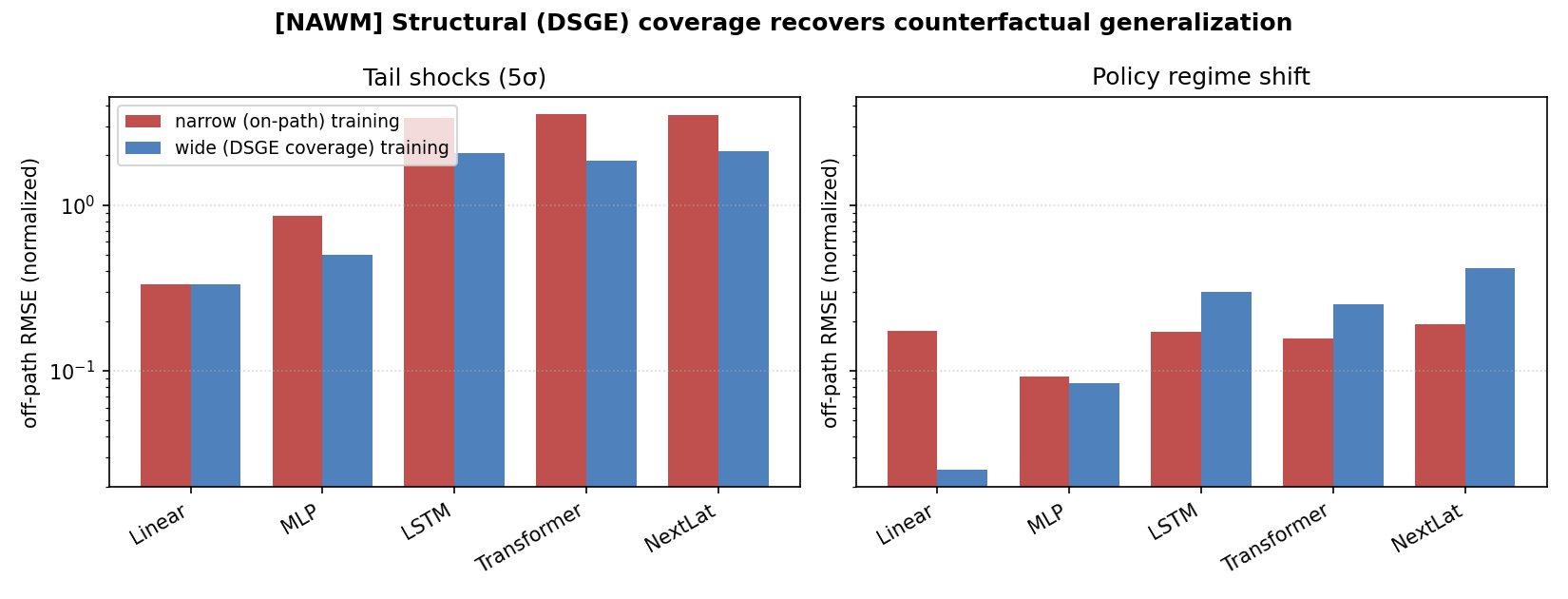}
\caption{Scaling to the ECB New Area-Wide Model (NAWM; 230 variables, 21 shocks). The
off-path collapse (H1) and its tail recovery from DSGE coverage (H2) persist at the scale
of a production policy model; the linear baseline matches the first-order DGP and is exact
on the tail.}
\label{fig:coverage_nawm}
\end{figure}

\subsection{Three further mechanisms: search unemployment, climate, and firm entry
(DMP, E-NK, Firm)}
\label{sec:morenv}
To show the protocol extends beyond the real/monetary/fiscal/open-economy block and to test
whether the dramatic \emph{regime} recovery is special to RBC, we add three environments
whose mechanisms appear in none of the others. \textbf{DMP} is a
Diamond--Mortensen--Pissarides search-and-matching economy (12 variables, 1 shock;
Appendix~\ref{app:dmp}) with an explicit labor market---vacancies, a matching function, and
a Nash-bargained wage---whose ``action'' is the unemployment-benefit / value-of-non-work
$b$; the off-path regime raises $b$ (a more generous UI rule). \textbf{E-NK} is the NK
economy augmented with a climate block (17 variables, 3 shocks; Appendix~\ref{app:enk}):
emissions proportional to output, an atmospheric carbon \emph{stock}, and a
carbon$\to$productivity \emph{damage} feedback; its action is climate policy, the carbon tax
that sets abatement intensity $\bar\mu$ ($\bar\mu{:}0{\to}0.4$). \textbf{Firm} is an
endogenous-entry economy (Bilbiie--Ghironi--Melitz; 11 variables, 1 shock;
Appendix~\ref{app:firm}) with a mass of firms, entrants, and firm value on a product-variety
margin; its action is entry policy---the sunk entry cost $f_E$, which the off-path regime
lowers ($f_E{:}1.0{\to}0.7$, deregulation). All three regimes \emph{shift} the ergodic
support (more generous UI raises steady-state unemployment; tighter abatement lowers the
carbon stock; lower entry cost raises the firm count from ${\approx}8.7$ to ${\approx}12.4$),
the configuration under which Proposition~\ref{prop:lucas} predicts the largest gap.
Table~\ref{tab:morenv} reports the result.

\textbf{H1} recurs: trained on-path, the sequence/attention models again blow up under
$5\sigma$ tail shocks (RMSE ${\approx}3.1$--$4.4$). On the \emph{regime} split all three
economies show a \emph{large} narrow-trained collapse (RMSE ${\approx}3.3$--$18.9$), exactly
as in RBC and unlike the support-\emph{contracting} NK/TANK/TCM counterfactuals---direct
evidence that the size of the regime gap is governed by whether the counterfactual moves the
support, not by the model family. \textbf{H2} then recovers the regime split dramatically
($10$--$670\times$; wide RMSE ${\approx}0.02$--$0.72$), confirming that the RBC-scale regime
recovery is a property of \emph{support-shifting} counterfactuals, now replicated under
labor-market, climate, and firm-entry actions. On the \emph{tail} split the picture is the
familiar mixed one: coverage roughly halves DMP's sequence-model error, but for E-NK and Firm
---whose slow, near-unit-root stocks (carbon stock; firm mass) make the wide regime segments
sit far from the on-path normalizer---wide training spreads capacity and does \emph{not} help
the tail, an instance of the coverage$\times$capacity interaction we flag throughout (it does
not undo the regime gain, which is what the structural generator uniquely supplies).

\begin{table}[t]
\centering
\caption{Three further mechanisms: off-path normalized RMSE, narrow (on-path) vs.\ wide
(DSGE-coverage) training, fixed sample size (mean over 3 seeds). All three regimes
\emph{shift} the ergodic support, and---as in RBC---wide coverage recovers the regime split
by $10$--$670\times$. \textbf{Bold} marks coverage gains on the expressive models; note the
tail cells where a slow stock variable makes wide coverage spread capacity and not help.}
\label{tab:morenv}
\begin{tabular}{llcccc}
\toprule
& & \multicolumn{2}{c}{Tail (5$\sigma$)} & \multicolumn{2}{c}{Policy regime} \\
\cmidrule(lr){3-4}\cmidrule(lr){5-6}
Env & Architecture & narrow & wide & narrow & wide \\
\midrule
\multirow{5}{*}{DMP}
  & Linear      & $0.219$ & $0.431$ & $5.681$ & $\mathbf{0.020}$ \\
  & MLP         & $0.446$ & $0.504$ & $3.345$ & $\mathbf{0.036}$ \\
  & LSTM        & $3.149$ & $\mathbf{2.160}$ & $3.874$ & $\mathbf{0.154}$ \\
  & Transformer & $3.678$ & $\mathbf{2.875}$ & $4.230$ & $\mathbf{0.146}$ \\
  & NextLat     & $3.074$ & $\mathbf{2.920}$ & $3.672$ & $\mathbf{0.357}$ \\
\midrule
\multirow{5}{*}{E-NK}
  & Linear      & $1.952$ & $1.862$ & $6.694$ & $\mathbf{0.114}$ \\
  & MLP         & $1.546$ & $3.990$ & $4.560$ & $\mathbf{0.089}$ \\
  & LSTM        & $3.703$ & $4.119$ & $6.910$ & $\mathbf{0.153}$ \\
  & Transformer & $4.404$ & $6.522$ & $6.912$ & $\mathbf{0.308}$ \\
  & NextLat     & $3.738$ & $6.372$ & $6.641$ & $\mathbf{0.222}$ \\
\midrule
\multirow{5}{*}{Firm}
  & Linear      & $0.192$ & $0.361$ & $18.882$ & $\mathbf{0.028}$ \\
  & MLP         & $0.326$ & $1.897$ & $12.899$ & $\mathbf{0.096}$ \\
  & LSTM        & $3.458$ & $5.276$ & $14.956$ & $\mathbf{0.266}$ \\
  & Transformer & $3.928$ & $5.209$ & $15.114$ & $\mathbf{0.297}$ \\
  & NextLat     & $3.204$ & $10.565$ & $14.647$ & $\mathbf{0.716}$ \\
\bottomrule
\end{tabular}
\end{table}

\section{Discussion and Limitations}

\paragraph{What the results mean.}
Our claim is deliberately scoped: a DSGE is a \emph{structured world model}, and its
structure---through coverage and constraint satisfaction---buys the counterfactual
generalization that learned world models lack. The experiments make this concrete and
measurable. The collapse of flexible learners off-path (H1) is not a tuning artifact: it
reproduces across eight economies, two solution orders (first- and third-order, i.e.\ both
linear and genuinely nonlinear data-generating processes), and model sizes
spanning $11$ to $230$ variables, and it \emph{worsens with capacity}---the models with
the most expressive learned dynamics fail hardest exactly where data are absent. The
recovery (H2) localizes \emph{why} structure helps to a single operational mechanism:
not the constraints per se, but the ability to \emph{generate} coverage of states history
never visited. A structural model is, first and foremost, a counterfactual data
generator; that is the property a purely learned world model cannot replicate, because it
has no access to states outside its training distribution.

\paragraph{A lesson beyond economics.}
The finding is an instance of a general principle for model-based agents: when the
decision-relevant states lie off the data manifold---policy counterfactuals, rare events,
sim-to-real gaps---a generative structural prior dominates a data-bounded learner, and the
cleanest way to inject that prior is as \emph{synthetic coverage} rather than as a soft
penalty. This reframes the role of mechanistic models (physics engines, simulators,
economic theory) in the world-model program: their value is not only as ground truth but
as \emph{distribution shapers} that teach a learned model where reality has not yet gone.
Economics is an unusually sharp testbed for this idea because its counterfactuals
(policy-rule changes) are both first-class objects of theory and provably absent from any
single historical sample---the Lucas critique made empirical. We stress the symmetric
danger: because the off-path region is exactly where model classes disagree most, synthetic
coverage is only as trustworthy as the generating model \emph{there}, and using it amounts
to trusting structure precisely where data cannot check it. That is not a reason to avoid
structural coverage but the strongest argument for \emph{learning} structure that data can
later correct (Paper~2), rather than imposing a fixed model wholesale.

\paragraph{When does coverage help? The capacity interaction.}
The NAWM scale test (\S\ref{sec:nawm}) refines the headline. Coverage reliably recovers
the \emph{tail} gap at every scale, but exploiting coverage on a counterfactual-regime
test requires sufficient model capacity: on the $230$-variable target the fixed-size
sequence models are underfit and cannot absorb the broader training distribution, so their
regime error does not improve (and can degrade), whereas the lower-capacity linear and
MLP learners improve sharply. Two readings follow. First, the benchmark surfaces a real
and under-appreciated trade-off---coverage and capacity must scale together---rather than
hiding it. Second, it suggests that the payoff from structural coverage is largest when
paired with a model expressive enough to use it, which is precisely the learned
structured world model we turn to next.

\paragraph{Limitations.}
\begin{enumerate}
\item \textbf{Solution order at scale, and what the scale test does (not) show.} The four
main environments use a pruned third-order solution (genuine nonlinearity); NAWM, at $230$
variables, is solved only at first order, so its tail is \emph{linear} and a linear baseline
matches the DGP exactly. The NAWM result therefore validates the harness and isolates pure
out-of-support extrapolation of the learners, but it does \emph{not} test the
\emph{nonlinear} off-path claim at production scale---that claim is established only at the
11--45-variable scale. A second-order solve of NAWM is the natural next step and we flag it
as the key missing piece of the scale story.

\item \textbf{Structure vs.\ coverage, and baseline strength.} Our design shows
DSGE-generated coverage helps, but does not yet isolate \emph{structure} from coverage
\emph{per se}. The decisive control is a non-structural coverage source---regime-dummy VARs,
parameter-perturbation augmentation, or a learned simulator---matched on sample size and
support; only then does ``the equilibrium structure matters'' separate from ``a wider
training region matters.'' Relatedly, baselines are shared and modestly tuned for protocol
uniformity, and standard off-path-robustness interventions (input-noise augmentation,
weight/spectral norms, distributional/quantile heads, ensembles) are not exhausted; H1 is
thus the behavior of standard, comparably-tuned learners, not a proof of an intrinsic
ceiling. We regard the non-structural control as the single most important addition for the
next revision and Paper~2.
\item \textbf{Structure is imposed, not learned.} The structured references here---the
DSGE itself, its linearized oracle, and the static-equilibrium penalty---are
\emph{exact-or-fixed} solutions, not \emph{learned} structured models. The negative
results (the first-order oracle is a poor off-path world model; static penalties do not
help) show that imposing approximate structure is insufficient; learning it is the central
object of Paper~2.
\item \textbf{Misspecification.} DSGE-generated coverage is only as good as the model
class. If the true economy departs from the DSGE, synthetic off-path data can be confidently
wrong---a real risk, and itself an argument for \emph{learning} structure (which can adapt)
over imposing it wholesale.
\item \textbf{Scope of the task.} We study one-step prediction (T1); compounding error in
multi-step rollouts (T2) and planning regret (T3) are released with the benchmark but not
analyzed here, as is the belief-state sufficiency probe. We also assume full observability
($z_t$ contains the structural state); the partially observed case, where the belief state
is a posterior, is left to future work. Baselines are deliberately modestly tuned and
shared across environments to keep the protocol uniform rather than to maximize any single
number.
\end{enumerate}
We regard limitations~2 and~3 as the central open problems, and the bridge to a learned,
NextLat-style structured world model for economics.

\section{Conclusion}
We recast DSGE models as \emph{structured world models} and showed, via
Proposition~\ref{prop:bridge}, that their state variable is the \emph{belief state} that
modern transformer world models are trained to learn---making two literatures that grew up
apart provably about the same object. To study the consequence we built \textbf{DSGE-Gym},
a benchmark that draws train and test from the same solved model but different regions of
its state space, turning off-path generalization into a clean measurement. Across eight
economies---from an $11$-variable RBC to the $230$-variable New Area-Wide Model of the
ECB, and spanning real, monetary, fiscal, open-economy, labor-market, climate, and
firm-entry policy actions---the story is consistent: purely learned world models
match the structured dynamics on-path but collapse off-path, and \emph{re-emerge} when the
\emph{same} architectures are trained on data the DSGE generates across rare and
counterfactual states. Structure helps off-path not as a constraint but as a generator of
coverage.

The negative results sharpen the agenda. A first-order oracle and static-equilibrium
penalties---approximate, imposed structure---do not close the gap, and at production scale
coverage pays off only when model capacity can absorb it. Both point the same way:
\emph{learn} a compact latent world model that carries DSGE structure as a training
signal rather than as a fixed scaffold. That is Paper~2 of this agenda (a NextLat-style
structured latent world model for economics); Paper~3 builds multi-agent digital-twin
economies on top of it. DSGE-Gym, its protocol, and all code are released so the community
can measure---and contest---counterfactual generalization of world models on common
ground.

\section*{Reproducibility and Resource Availability}
All code (Julia DSGE environments built on MacroModelling.jl and PyTorch baselines),
generated datasets, and scripts to reproduce every table and figure are released at the
project repository. DSGE-Gym is designed as a reusable benchmark: each environment
exposes the same interface and counterfactual test sets. Consistent with Datasets \&
Benchmarks norms, the artifact ships with a \emph{datasheet} documenting every environment
(variables, shocks, calibration source, steady-state verification against published
values), per-environment dataset sizes and generation seeds, the train/test split
definitions of Appendix~\ref{app:tests} (including the leave-one-regime-out and
non-structural-coverage variants discussed above), and an OSI-approved open-source license;
an anonymized artifact link is provided to reviewers and de-anonymized for camera-ready.

\bibliographystyle{plainnat}
\bibliography{references}

\newpage
\appendix

\section{DSGE-Gym environment specifications}
\label{app:models}

This appendix gives the complete equilibrium system, calibration, and steady state of
each environment reported in the body (RBC, NK, TANK, TCM), together with the explicit
world-model mapping. Notation follows \citet{nispilandi2018rbc,nispilandi2021nk}: $c$ is
consumption, $h$ labor, $k$ capital, $i$ investment, $y$ output, $w$ the real wage, $rk$
the rental rate of capital, $rr$ the real interest rate, $\lambda$ the marginal utility
of consumption, $a$ total factor productivity, and $g$ government spending; in the
nominal models $r$ is the gross nominal rate, $\pi$ gross inflation, $mc$ real marginal
cost, and $q$ Tobin's $q$. Bracketed time indices in the verified source
(\texttt{MacroModelling.jl}) map to the standard convention $x[-1]\!\equiv\!x_{t-1}$,
$x[0]\!\equiv\!x_t$, $x[1]\!\equiv\!E_t x_{t+1}$. Every steady state below is solved
numerically and matches the published calibration to all reported digits.

\paragraph{World-model mapping.}
Each environment is cast as $z_{t+1}=F(z_t,a_t,\varepsilon_{t+1})$ with the assignment in
Table~\ref{tab:mapping}. The \emph{state} $z_t$ is the full vector of model variables
(the predetermined/exogenous components---capital, lagged investment, debt, the lagged
policy rate, and the exogenous processes---are the minimal Markov state; the remaining
jump variables are functions of these). The \emph{action} $a_t$ is the policy block whose
counterfactual is studied; the \emph{shocks} $\varepsilon_t$ are the structural
innovations.

\begin{table}[h]
\centering
\caption{World-model assignment per environment. ``Min.\ state'' lists the predetermined
and exogenous variables that constitute the Markov state (jumpers omitted).}
\label{tab:mapping}
\small
\begin{tabular}{llll}
\toprule
Env & Min.\ state (predetermined + exog.) & Action (policy) & Shocks $\varepsilon_t$ \\
\midrule
RBC  & $k_{t-1},a_t,g_t$                       & gov.\ spending $g$ (rule)        & $\varepsilon^a,\varepsilon^g$ \\
NK   & $k_{t-1},i_{t-1},r_{t-1},a_t,g_t$       & nominal rate $r$ (Taylor rule)  & $\varepsilon^a,\varepsilon^g,\varepsilon^m$ \\
TANK & $k_{t-1},i_{t-1},d_{t-1},r_{t-1},a_t,g_t$ & fiscal financing (tax rules)  & $\varepsilon^a,\varepsilon^g,\varepsilon^m$ \\
TCM  & {\scriptsize $k,i,r,a,g,b_H,b_F,s,p_H$ (\& Foreign)} & two Taylor rules ($\phi_e$) & {\scriptsize $\varepsilon^a,\varepsilon^g,\varepsilon^m$ (\& *)} \\
\bottomrule
\end{tabular}
\end{table}

\subsection{RBC}
\label{app:rbc}
The real-business-cycle environment \citep{nispilandi2018rbc} has 11 endogenous variables
$\{\lambda,c,rr,rk,w,h,y,k,i,g,a\}$ and 2 shocks $\{\varepsilon^a,\varepsilon^g\}$. With
investment adjustment cost set to zero, Tobin's $q\equiv1$ is dropped. The equilibrium
conditions are
\begin{align}
\lambda_t &= c_t^{-\sigma}, \qquad \kappa_L h_t^{\phi} = \lambda_t w_t, \\
\lambda_t &= \beta\,\lambda_{t+1}\,rr_t, \qquad
1 = \beta\,\tfrac{\lambda_{t+1}}{\lambda_t}\,(rk_{t+1}+1-\delta), \\
k_t &= (1-\delta)k_{t-1}+i_t, \qquad y_t = a_t k_{t-1}^{\alpha} h_t^{1-\alpha}, \\
\alpha\,y_t &= rk_t k_{t-1}, \qquad (1-\alpha)\,y_t = w_t h_t, \qquad
y_t = c_t + i_t + g_t, \\
\log a_t &= (1-\rho_a)\log\bar a + \rho_a\log a_{t-1} + \sigma_a\varepsilon^a_t, \\
\log g_t &= (1-\rho_g)\log\bar g + \rho_g\log g_{t-1} + \sigma_g\varepsilon^g_t.
\end{align}
Calibration (quarterly): $\beta{=}0.99$, $\alpha{=}0.33$, $\delta{=}0.025$,
$\sigma{=}2$, $\phi{=}1$, $\rho_a{=}\rho_g{=}0.9$, $\bar g{=}0.2$,
$\sigma_a{=}\sigma_g{=}0.01$, with $\bar a{=}0.98377$ and $\kappa_L{=}18.889$ calibrated
so that $y\!\approx\!1$, $h\!\approx\!1/3$. The off-path \textbf{regime} split raises the
spending process to $(\bar g,\rho_g){=}(0.25,0.97)$; the process remains stationary
($\rho_g<1$), so the ergodic set is well defined and the counterfactual shifts its
\emph{location} (not just its shape). This relocation of the support is why RBC exhibits the
largest regime collapse---and the largest coverage recovery---in the suite
(\S\ref{sec:replication}).

\subsection{New Keynesian (NK)}
\label{app:nk}
The NK environment \citep{nispilandi2021nk} adds monopolistic competition, Rotemberg
price adjustment, investment adjustment costs, and a Taylor rule: 15 variables
$\{\lambda,c,rr,rk,w,h,y,k,q,i,r,mc,\pi,g,a\}$ and 3 shocks
$\{\varepsilon^a,\varepsilon^g,\varepsilon^m\}$. The equilibrium conditions are
\begin{align}
\lambda_t &= c_t^{-\sigma}, \qquad
1 = \beta\,\tfrac{\lambda_{t+1}}{\lambda_t}\,\tfrac{r_t}{\pi_{t+1}}, \qquad
\kappa_L h_t^{\phi} = \lambda_t w_t, \\
1 &= \beta\,\tfrac{\lambda_{t+1}}{\lambda_t}\,\tfrac{rk_{t+1}+(1-\delta)q_{t+1}}{q_t}, \\
k_t &= (1-\delta)k_{t-1} + \big[1-\tfrac{\kappa_I}{2}(i_t/i_{t-1}-1)^2\big]i_t, \\
1 &= q_t\big[1-\tfrac{\kappa_I}{2}(i_t/i_{t-1}-1)^2-\kappa_I\tfrac{i_t}{i_{t-1}}(i_t/i_{t-1}-1)\big] \nonumber\\
  &\quad + \kappa_I\beta\,\tfrac{\lambda_{t+1}}{\lambda_t}q_{t+1}\big[(i_{t+1}/i_t)^2(i_{t+1}/i_t-1)\big], \\
rr_t &= \tfrac{r_t}{\pi_{t+1}}, \qquad
y_t = a_t k_{t-1}^{\alpha} h_t^{1-\alpha}, \\
(1-\alpha)mc_t y_t &= w_t h_t, \qquad \alpha\,mc_t y_t = rk_t k_{t-1}, \\
\pi_t(\pi_t-\bar\pi) &= \beta\,\tfrac{\lambda_{t+1}}{\lambda_t}\,\pi_{t+1}(\pi_{t+1}-\bar\pi)\tfrac{y_{t+1}}{y_t}
   + \tfrac{\varepsilon}{\kappa_P}\big(mc_t-\tfrac{\varepsilon-1}{\varepsilon}\big), \\
y_t &= c_t + i_t + g_t + \tfrac{\kappa_P}{2}(\pi_t-\bar\pi)^2 y_t, \\
\tfrac{r_t}{\bar r} &= \big(\tfrac{r_{t-1}}{\bar r}\big)^{\rho_r}
   \big[(\pi_t/\bar\pi)^{\phi_\pi}(y_t/\bar y)^{\phi_y}\big]^{1-\rho_r}\exp(\sigma_m\varepsilon^m_t), \\
\log a_t &= (1-\rho_a)\log\bar a + \rho_a\log a_{t-1} + \sigma_a\varepsilon^a_t, \\
\log g_t &= (1-\rho_g)\log\bar g + \rho_g\log g_{t-1} + \sigma_g\varepsilon^g_t.
\end{align}
Here $\varepsilon$ is the elasticity of substitution across varieties and $\kappa_P$ the
Rotemberg cost. Calibration: $\beta{=}0.99$, $\alpha{=}0.33$, $\varepsilon{=}6$,
$\delta{=}0.025$, $\sigma{=}2$, $\phi{=}1$, $\bar g{=}0.2$, $\bar\pi{=}\bar y{=}1$,
$\bar r{=}1/\beta$, $\phi_\pi{=}1.5$, $\phi_y{=}0.125$, $\kappa_I{=}2.48$,
$\rho_a{=}\rho_g{=}0.9$, $\rho_r{=}0.8$, $\kappa_P{=}28.003$ (Calvo-equivalent $0.66$),
$\sigma_a{=}\sigma_g{=}0.01$, $\sigma_m{=}0.0025$, with $\bar a{=}1.0584$ and
$\kappa_L{=}13.768$. Steady state: $\pi{=}1$, $y{=}1$, $h{=}1/3$, $mc{=}5/6$,
$r{=}rr{=}1/\beta$, $q{=}1$, $k{=}7.835$. The \textbf{regime} split sets a hawkish
Taylor rule $(\phi_\pi,\phi_y){=}(3.0,0.5)$.

\subsection{Two-agent New Keynesian (TANK)}
\label{app:tank}
TANK \citep{galilopezsalidovalles2007} splits households into a fraction $1-\eta$ of
optimizing (Ricardian) agents and $\eta$ rule-of-thumb (hand-to-mouth) agents, adds
public debt $d$ and two tax-financing rules: 22 variables
$\{\lambda,c,rr,rk,w,h,y,k,q,i,r,mc,\pi,g,a,c^o,c^r,h^o,h^r,t^r,t^o,d\}$ and 3 shocks.
Superscripts $o,r$ denote optimizers and rule-of-thumb agents. The Euler, capital,
Phillips, production, Taylor, and exogenous-process equations are as in NK (with
optimizer marginal utility $\lambda_t=(c^o_t)^{-\sigma}$ and labor supply
$\kappa_L (h^o_t)^{\phi}=\lambda_t w_t$); the additional blocks are
\begin{align}
\kappa_L (h^r_t)^{\phi}(c^r_t)^{\sigma} &= w_t, \qquad
c_t = (1-\eta)c^o_t + \eta c^r_t, \qquad
h_t = (1-\eta)h^o_t + \eta h^r_t, \\
c^r_t &= w_t h^r_t - t^r_t, \qquad
g_t + \tfrac{r_{t-1}}{\pi_t}d_{t-1} = (1-\eta)t^o_t + \eta t^r_t + d_t, \\
t^r_t - \bar t^r &= \phi_d(d_{t-1}-\bar d) + \phi_g(g_{t-1}-\bar g), \\
t^o_t - \bar t^o &= \phi_d(d_{t-1}-\bar d) + \phi_g(g_{t-1}-\bar g).
\end{align}
Because rule-of-thumb agents consume disposable income, Ricardian equivalence fails: how
spending is financed (the $\phi_d,\phi_g$ rules) affects aggregate consumption. This is
why the TANK \textbf{action} is the fiscal-financing block. Calibration:
$\beta{=}0.99$, $\alpha{=}1/3$, $\varepsilon{=}6$, $\delta{=}0.025$, $\sigma{=}1$,
$\phi{=}0.2$, $\eta{=}0.5$, $\bar g{=}0.2$, $\bar d{=}4$ (annual debt/GDP $=1$),
$\phi_\pi{=}1.5$, $\phi_y{=}0$, $\kappa_I{=}2.48$, $\rho_a{=}\rho_g{=}0.9$, $\rho_r{=}0$,
$\phi_g{=}0.1$, $\phi_d{=}0.33$, $\kappa_P{=}58.252$, with $\bar a{=}1.0438$,
$\kappa_L{=}3.448$, $\bar t^r{=}-0.0466$, $\bar t^o{=}0.5274$. Steady state:
$c{=}c^o{=}c^r{=}0.602$, $d{=}4$, $k{=}7.914$, $\lambda{=}1.661$, $y{=}1$. The
\textbf{regime} split uses aggressive stabilization $(\phi_d,\phi_g){=}(0.6,0.5)$.
\emph{Determinacy caveat:} large hand-to-mouth shares $\eta$ violate the Taylor principle
(inverted aggregate-demand logic, \citealp{bilbiie2008limited}) and render the solution
explosive; we therefore hold $\eta$ fixed and vary fiscal/persistence parameters for
coverage.

\subsection{Two-country open economy (TCM)}
\label{app:tcm}
TCM \citep[after][]{galilopezsalidovalles2007} is a two-country New Keynesian world
economy (Home of size $n$, Foreign of size $1-n$) with incomplete markets, producer-currency
pricing and the law of one price, physical capital, and Rotemberg pricing: \textbf{45
variables and 6 shocks} (Home $\{\varepsilon^a,\varepsilon^g,\varepsilon^m\}$ and Foreign
$\{\varepsilon^{a*},\varepsilon^{g*},\varepsilon^{m*}\}$). Within each country the block
mirrors the NK equations of Appendix~\ref{app:nk} (marginal utility, capital Euler,
labor supply/demand, investment FOC, capital accumulation, production, Rotemberg Phillips
curve, Taylor rule, exogenous processes), written for Home goods (Foreign vars starred).
The distinctive open-economy blocks are
\begin{align}
1 &= \beta\,\tfrac{\lambda_{t+1}}{\lambda_t}\,\tfrac{r^*_t}{\pi^*_{t+1}}\,\tfrac{s_{t+1}}{s_t}
   - \kappa_D(b_{Ft}-\bar b_F),   &&\text{(Foreign-bond Euler / UIP)} \\
\tfrac{s_t}{s_{t-1}} &= \Delta e_t\,\tfrac{\pi^*_t}{\pi_t}, \qquad
   p_{Ht}=s_t p^*_{Ht},\quad p_{Ft}=s_t p^*_{Ft}, &&\text{(RER, law of one price)} \\
1 &= (1-\gamma)p_{Ht}^{1-\eta}+\gamma p_{Ft}^{1-\eta}, &&\text{(Home CPI aggregator)} \\
n\,b_{Ht}+(1-n)b^*_{Ht} &= 0, \qquad n\,b_{Ft}+(1-n)b^*_{Ft}=0, &&\text{(bond clearing)}
\end{align}
together with Home/Foreign goods-market clearing that mixes domestic and imported demand,
the trade balance, and $gdp_t=p_{Ht}y_{Ht}$. The Taylor rules optionally respond to the
gross depreciation $\Delta e_t$ with coefficient $\phi_e$ (Home $+\phi_e$, Foreign
$-\phi_e$); $\kappa_D$ is a small bond-adjustment cost that renders net foreign assets
stationary. Calibration: $\beta{=}0.99$, $\alpha{=}0.33$, $\varepsilon{=}6$,
$\delta{=}0.025$, $\sigma{=}2$, $\phi{=}1$, $\eta{=}1.5$ (intratemporal substitution),
$n{=}0.2$, openness $\omega{=}0.3$ giving $\gamma{=}0.2444$, $\gamma^*{=}0.0556$;
$\phi_\pi{=}1.5$, $\phi_y{=}0.125$, $\phi_e{=}0$, $\kappa_I{=}2.48$, $\kappa_D{=}0.01$,
$\rho_a{=}\rho_g{=}0.9$, $\rho_r{=}0.8$, $\kappa_P{=}28.003$; external debt/GDP $D{=}0$,
$D^*{=}0.1$. The steady state is \emph{not} analytic (the source solves $p_H$ by a
numerical root-find); our solver reproduces all published values, e.g.\ $p_H{=}1.0206$,
real exchange rate $s{=}0.9443$, $gdp{=}1$, $gdp^*{=}1.1649$, $k{=}7.835$, $k^*{=}9.127$,
$b_H{=}1.76$, $b^*_H{=}-0.44$, $tb{=}-0.0178$. The \textbf{regime} split turns on the
exchange-rate response $\phi_e{:}0{\to}0.5$.

\subsection{Search-and-matching unemployment (DMP)}
\label{app:dmp}
The DMP environment embeds a Diamond--Mortensen--Pissarides labor market in an RBC shell:
11 endogenous variables plus TFP $\{a,c,n,u,v,\theta,m,q,f,w,y,\lambda\}$ and a single TFP
shock $\varepsilon^a$. Let $n$ be employment, $u=1-(1-s)n_{t-1}$ the job seekers after
separations, $v$ vacancies, $\theta=v/u$ tightness, $m=\chi u^{\xi}v^{1-\xi}$ matches,
$f=m/u$ and $q=m/v$ the job-finding and vacancy-filling rates. The equilibrium is
\begin{align}
\lambda_t &= c_t^{-\sigma}, \qquad n_t=(1-s)n_{t-1}+m_t, \qquad y_t=a_t n_t, \qquad
c_t=y_t-\kappa v_t,\\
\tfrac{\kappa}{q_t} &= \beta\,\tfrac{\lambda_{t+1}}{\lambda_t}\big(a_{t+1}-w_{t+1}
   +(1-s)\tfrac{\kappa}{q_{t+1}}\big), \qquad
w_t=\eta\,(a_t+\kappa\theta_t)+(1-\eta)\,b,\\
\log a_t &= \rho_a\log a_{t-1}+\sigma_a\varepsilon^a_t,
\end{align}
with a Cobb--Douglas matching function and Nash bargaining (Hosios at $\xi=\eta$).
Calibration (quarterly): $\beta{=}0.99$, $\sigma{=}2$, $\xi{=}\eta{=}0.5$, separation
$s{=}0.10$, matching efficiency $\chi{=}0.6$, vacancy cost $\kappa{=}0.44$, value of
non-work $b{=}0.40$, $\rho_a{=}0.95$, $\sigma_a{=}0.007$, targeting $\bar\theta{=}1$,
$\bar f{=}\bar q{=}0.6$, employment $\bar n{=}0.9375$ (unemployment $6.25\%$). The
\textbf{action} is the labor-market policy $b$; the \textbf{regime} split raises it to
$b{=}0.55$ (more generous UI), which lowers steady-state employment and \emph{shifts} the
ergodic support.

\subsection{Environmental New Keynesian (E-NK)}
\label{app:enk}
E-NK augments the NK economy (Appendix~\ref{app:nk}) with a climate block: 17 variables
$\{\lambda,c,rr,rk,w,h,y,k,q,i,r,mc,\pi,g,a,em,stock\}$ and the same 3 shocks. Production
carries a carbon$\to$productivity damage and the resource constraint nets out abatement
cost,
\begin{align}
y_t &= a_t\,e^{-d\,stock_{t-1}}\,k_{t-1}^{\alpha}h_t^{1-\alpha}, \qquad
y_t\,(1-\psi_a\bar\mu^2) = c_t+i_t+g_t+\tfrac{\kappa_P}{2}(\pi_t-\bar\pi)^2 y_t,\\
em_t &= \phi_e\,y_t\,(1-\bar\mu), \qquad stock_t=(1-\delta_x)\,stock_{t-1}+em_t,
\end{align}
with all other NK blocks (marginal utility, capital/investment Euler, Phillips curve, Taylor
rule, exogenous processes) unchanged. New calibration: emission intensity $\phi_e{=}0.10$,
carbon decay $\delta_x{=}0.10$, damage $d{=}0.02$, abatement cost-share $\psi_a{=}0.05$,
baseline abatement $\bar\mu{=}0$; remaining parameters as in NK. The \textbf{action} is
climate policy---the carbon tax that sets abatement intensity $\bar\mu$; the \textbf{regime}
split tightens it to $\bar\mu{=}0.4$, which cuts steady-state emissions and the carbon stock
(${\approx}0.99{\to}0.60$) and hence the damage drag, \emph{shifting} the ergodic support.
The slow, near-unit-root stock is also why E-NK's \emph{tail} split does not benefit from
wide coverage (\S\ref{sec:morenv}): the wide regime segments sit far from the on-path
normalizer fit on \texttt{train}.

\subsection{Endogenous firm entry (Firm)}
\label{app:firm}
The firm-entry environment is the real Bilbiie--Ghironi--Melitz (2012) economy with a
product-variety margin: 10 endogenous variables plus productivity
$\{Z,C,N,N_E,d,v,w,L,\rho,y,\lambda\}$ and one productivity shock $\varepsilon^z$. With CES
elasticity $\theta$ (gross markup $\mu=\theta/(\theta-1)$), a mass $N_t$ of producing firms,
$N_{E,t}$ entrants, per-firm profit $d_t$, and firm value $v_t$, the equilibrium is
\begin{align}
\rho_t &= N_t^{1/(\theta-1)}, \qquad \rho_t=\tfrac{\theta}{\theta-1}\,\tfrac{w_t}{Z_t}, \qquad
y_t=\rho_t^{-\theta}C_t, \qquad d_t=\tfrac{1}{\theta}\rho_t y_t,\\
L_t &= \tfrac{N_t y_t}{Z_t}+\tfrac{f_E N_{E,t}}{Z_t}, \qquad v_t=\tfrac{w_t f_E}{Z_t}, \qquad
v_t=\beta\,\tfrac{\lambda_{t+1}}{\lambda_t}(1-\delta)(v_{t+1}+d_{t+1}),\\
N_t &= (1-\delta)(N_{t-1}+N_{E,t-1}), \qquad \lambda_t=C_t^{-\sigma}, \qquad
\chi L_t^{\varphi}=w_t\lambda_t, \qquad \log Z_t=\rho_z\log Z_{t-1}+\sigma_z\varepsilon^z_t,
\end{align}
where $\rho_t$ is the relative price of a variety (the variety effect $N_t^{1/(\theta-1)}$)
and free entry equates firm value to the sunk entry cost $f_E$ in effective-labor units.
Calibration (quarterly): $\beta{=}0.99$, exit $\delta{=}0.025$, $\theta{=}3.8$
($\mu{\approx}1.36$), $\sigma{=}1$, $\varphi{=}4$, $f_E{=}1$, $\chi{=}0.6$,
$\rho_z{=}0.9$, $\sigma_z{=}0.01$ (steady state $\bar N{\approx}8.6$). The \textbf{action} is
entry policy $f_E$; the \textbf{regime} split lowers it to $f_E{=}0.7$ (deregulation), raising
entry and the firm count to ${\approx}12.4$ and so \emph{shifting} the ergodic support. As in
E-NK, the slow firm-mass state makes the \emph{tail} split unresponsive to wide coverage.

\section{Counterfactual test-set construction}
\label{app:tests}
From each \emph{single} solved model we generate five splits with a pruned third-order
simulation. (i)~\textbf{train} (on-path): 5000 periods under the baseline policy and
ergodic shocks ($1\sigma$). (ii)~\textbf{test\_onpath}: 1000 held-out periods, same
regime. (iii)~\textbf{test\_tail} (off-path): 1000 periods with $5\sigma$ innovations,
pushing into the nonlinear region. (iv)~\textbf{test\_regime} (off-path): 1000 periods
under the counterfactual policy parameters listed per model above, evaluated by a model
trained only on the baseline regime. (v)~\textbf{train\_wide} (the ``coverage''
condition): the \emph{same} total size as \texttt{train} (5000), built by the DSGE across
shock scales $1\text{--}5\sigma$ (five 600-period segments) and four counterfactual policy
regimes (four 500-period segments at $2\sigma$); only the \emph{coverage} differs from
\texttt{train}, not the quantity. All learned models share a normalizer fit on the narrow
on-path \texttt{train} so that RMSE is comparable across conditions. A numerical guard
rejects any explosive segment (used only for TANK, see Appendix~\ref{app:tank}).
\emph{Transparency on overlap:} the four wide regimes \emph{include} the parameter setting
used to build \texttt{test\_regime}, so the H2 regime result quantifies the value of a
structural generator \emph{synthesizing} the counterfactual distribution (which no single
history contains; \S\ref{sec:exp}, Remark~\ref{rem:lucas}), not a learner extrapolating to a
withheld regime. The released benchmark additionally provides (a)~a
\emph{leave-one-regime-out} construction---wide training over a band of regimes with one
withheld for test---and (b)~a \emph{non-structural} coverage baseline (regime-dummy VAR /
parameter-perturbation augmentation) matched on size and support, as the two stricter probes
that separate structure from coverage; these are released with the code and analyzed in
Paper~2.

\section{Implementation details}
\label{app:impl}
\paragraph{Task and normalization.}
The one-step input is $x_t=[z_t,\varepsilon_{t+1}]$ and the target is $z_{t+1}$; including
the realized next-period shock as the ``action'' makes the transition (near-)deterministic
so RMSE isolates the learned map. All tensors are $z$-scored using \texttt{train}
statistics only; reported RMSE is in normalized units.

\emph{Terminology.} Two distinct objects are involved and we keep them separate. The
\emph{policy action} is the rule/regime block whose counterfactual defines the
\texttt{regime} split (e.g.\ the Taylor or fiscal-financing coefficients); it is \emph{fixed
within an environment}, not chosen per step, and it is the ``action'' of the world-model
framing (\S\ref{sec:swm}) and of Table~\ref{tab:mapping}. The per-step quantity fed to the
one-step predictor is the structural \emph{shock} $\varepsilon_{t+1}$, supplied so that the
transition is deterministic. Where the interface treats $\varepsilon_{t+1}$ as the
controllable input we say ``shock,'' and reserve ``action'' for the policy block; the MDP
reading in which an agent \emph{chooses} actions is exercised by T3 (planning), not by the
T1 regression studied here.

\paragraph{Baselines.}
\textbf{Linear}: ridge regression ($\ell_2=10^{-3}$). \textbf{MLP}: two hidden layers of
128 units, SiLU. \textbf{LSTM}/\textbf{Transformer}: a shared encoder over a length-8
context window mapping to a 64-d latent (Transformer: 2 layers, 4 heads, FFN $2\times$),
with a head conditioned on the action. \textbf{NextLat}: the Transformer plus a
next-latent self-prediction auxiliary loss (weight $0.1$) on temporally contiguous
batches \citep{teoh2025nextlat}. All models train with Adam (weight decay $10^{-5}$),
MLP/vector models for 80 epochs at lr $2\times10^{-3}$, sequence models for 60 epochs at
lr $10^{-3}$, batch size 256; means over 3 seeds.

\paragraph{DSGE oracles.}
The \emph{first-order} oracle reads the analytic policy matrix exported from
MacroModelling.jl and predicts $z_{t+1}=\mathrm{SS}+\sum_s C_s(\text{state}_s-\mathrm{SS}_s)+\sum_e C_e\varepsilon_e$.
The \emph{third-order} oracle evaluates the pruned policy from the observable state with
the realized shock. Both are upper-bound references and are not part of the headline
narrow-vs-wide sweep.

\paragraph{Structural penalty (ablation).}
The optional hybrid adds, as a soft penalty, the model's \emph{static} (policy-invariant)
equilibrium residuals---marginal utility, factor demands, production, resource
constraint, capital accumulation, and (TANK) the household aggregation/budget
identities---evaluated on the denormalized one-step prediction, including an EWM-style
collocation set of noised inputs. Intertemporal (Euler/Phillips) conditions and
policy-rule coefficients are excluded so the prior remains valid across counterfactual
regimes.

\paragraph{NAWM at scale.}
The NAWM environment (\S\ref{sec:nawm}) is the unmodified \texttt{NAWM\_EAUS\_2008} model
from the MacroModelling.jl library ($230$ variables, $21$ shocks). Because a third-order
solve is infeasible at this size, its data are generated from the \emph{first-order}
solution and it has no third-order oracle; only the first-order oracle and the headline
narrow-vs-wide sweep are run. Its variable and shock names are read programmatically and
recorded in a sidecar so the Python loader infers the column layout automatically---no
per-model code is needed, which is what makes scaling the benchmark mechanical. The
identical split generator, normalizer, and baselines are used as for the small models.

\end{document}